\newtheorem{theorem}{Theorem}     
\newtheorem{lemma}{Lemma}         
\newtheorem{definition}{Definition} 
\newtheorem{protocol}{Protocol}
\newtcolorbox[auto counter]{example}[1][]{enhanced,fonttitle=\sffamily\bfseries\large,valign=center
,title={Example \thetcbcounter},label=#1,left=2pt,right=2pt,breakable}
\crefname{tcb@cnt@example}{Ex.}{Exs.}
\DeclareFontFamily{U}{bbold}{}
\DeclareFontShape{U}{bbold}{m}{n}
{
<-5.5> s*[1.069] bbold5
<5.5-6.5> s*[1.069] bbold6
<6.5-7.5> s*[1.069] bbold7
<7.5-8.5> s*[1.069] bbold8
<8.5-9.5> s*[1.069] bbold9
<9.5-11> s*[1.069] bbold10
<11-15> s*[1.069] bbold12
<15-> s*[1.069] bbold17
}{}
\DeclareRobustCommand{\identity}{%
	\text{\usefont{U}{bbold}{m}{n}1}%
	}
\begin{document}


	\title{Communication-Optimal Blind Quantum Protocols}

	\author{Ethan Davies}
	\email{Ethan.Davies.2021@live.rhul.ac.uk}
	\author{Alastair Kay}%
	\email{Alastair.Kay@rhul.ac.uk}
	\affiliation{Royal Holloway University of London, Egham, Surrey, TW20 0EX, UK}%

	\date{\today}

	\begin{abstract}
	A user, Alice, wants to get server Bob to implement a quantum computation for her. However, she wants to leave him blind to what she's doing. What are the minimal communication resources Alice must use in order to achieve information-theoretic security? In this paper, we consider a single step of the protocol, where Alice conveys to Bob whether or not he should implement a specific gate. We use an entropy-bounding technique to quantify the minimum number of qubits that Alice must send so that Bob cannot learn anything about the gate being implemented. We provide a protocol that saturates this bound. In this optimal protocol, the states that Alice sends may be entangled. For Clifford gates, we prove that it is sufficient for Alice to send separable states.
	\end{abstract}

	\maketitle
	
	Quantum computers promise speed-ups over classical computers for important computational tasks, ranging from quadratic to exponential \cite{shor1994,grover1996fast,harrow2009quantum}. This is anticipated to create a vast demand for the computational abilities provided by quantum computers, even among those who lack the expertise to manage them. The future will likely consist of parties (Alice) with small or no quantum capabilities wishing to gain the results of some quantum computation. To achieve this, they must interact with another agent (Bob) who has large quantum capabilities. For security reasons, Alice may want Bob to remain blind to the details of what she is computing; to hide both the computation and its outcome from Bob. Many protocols achieve this blind quantum computation \cite{childs,broadbentUniversalBlindQuantum2009,fitzsimons2017}.

	Blind quantum computing protocols can be separated into three main categories. In its primary mode, Alice has some limited quantum capability, known as semi-quantum. This is usually the ability to prepare a specific set of quantum states \cite{broadbentUniversalBlindQuantum2009} (a form of remote state preparation \cite{bennett2001remote}), or to apply a limited set of quantum gates \cite{childs}. Alice can be made classical, either by trading information theoretic security for computational security \cite{mahadev2020,brakerski2018,davies2024efficient}, or by playing multiple servers against each other in an interactive proof system \cite{reichardt2013classical}, with the unverifiable assumption that the otherwise untrusted servers do not communicate with each other.
	
	In this paper, we focus on two forms of semi-quantum capabilities for Alice. The first, Prepare and Send (PS), is the primary mode under which blind quantum computation is traditionally considered -- Alice prepares quantum states and sends them to Bob. For the second form, Receive and Measure (RM), it is Bob that prepares arbitrary quantum states and sends them to Alice. All Alice has to do is measure the qubits when she receives them \cite{morimae2012blind, morimae2014verification}. Destructive measurements will suffice. These two cases are largely equivalent. In particular, following either PS or RM, if the quantum communication is half a maximally entangled state, it can teleport a state in the opposite direction \footnote{The caveat being the consequences of different measurement outcomes in the teleportation. In our optimal protocols, it will turn out that these different outcomes precisely provide the padding we need to keep protocols secret, but we cannot say that about a generic protocol. Hence that \emph{largely} equivalent claim.}.

	We show how to optimise the protocols so that Alice makes the most of the resources she has (i.e.\ the amount of quantum communication between her and Bob). There are two extremes of what precisely we might optimise. In \cite{mantri2013}, they allowed a PS protocol where Alice could send a fixed number of qubits, $n$, to Bob. They then demanded the largest family of gates $\mathcal{F}$ that can be blindly achieved, providing (non-tight) upper and lower bounds. While we will tighten these bounds, our main focus is the opposite extreme: Alice has a set of gates $\mathcal{F}$ she would like to implement blindly. For a fixed $\mathcal{F}$, how few qubits can Alice prepare and send to still be able to blindly implement every gate in $\mathcal{F}$?

	We prove a lower bound for all protocols where Alice has either PS or RM under the assumption that the output state has a particular form of padding known as Pauli padding. We show that this bound is saturated when Alice can create entangled states or measure in an entangled basis. In the special case where $\mathcal{F} = \{\identity,U_{\text{Cl}}\}$ for a Clifford gate $U_{\text{Cl}}$, this bound can be met with Alice only being able to create or measure separable Pauli basis states, and $\mathcal{F}$ can be extended for free to a large set of different Clifford gates. We achieve this by identifying the Pauli operators that do not contribute to hiding the differences between the members of $\mathcal{F}$. These can be dropped from the output padding and, hence, Alice has to supply less entropy to Bob. 
\section{Summary of Results}

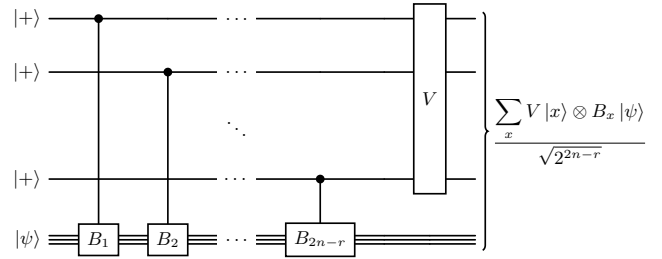
\begin{figure}
\centering
\begin{adjustbox}{max width=0.48\textwidth}
		\begin{quantikz}[wire types = {q,q,n,q,b}, classical gap = 0.07cm]
		\lstick{$\ket{+}$} & \ctrl{4} &&\gate[style={draw=none}]{\ldots} &&& \gate[4]{V}&\rstick[5]{$\displaystyle\frac{\displaystyle\sum_x V\ket{x}\otimes B_x \ket{\psi}}{\sqrt{2^{2n-r}}} $}\\
		\lstick{$\ket{+}$}&& \ctrl{3} & \gate[style={draw=none}]{\ldots}&& &&\\[-0.2cm]
		 & &  & \gate[style={draw=none}]{\ddots}  & & &&   \\[-0.2cm]
		\lstick{$\ket{+}$}&& & \gate[style={draw=none}]{\dots}  &\ctrl{1}& &&\\
		\lstick{$\ket{\psi}$} & \gate{B_1} & \gate{B_2} & \gate[style={draw=none}]{\ldots} & \gate{B_{2n-r}} & &&
		\end{quantikz}
\end{adjustbox}
\caption{General circuit used for RM optimal blind quantum computing. Bob entangles his state $\ket{\psi}$ with an additional $\text{dim}(\mathcal{B})=2n-r$ qubits which are then sent to Alice.}
\label{general_measurement_circuit}
\end{figure}

We briefly summarise the results here so that they do not get lost in the technical exposition that follows. Imagine that Alice wants to apply a gate $U\in\mathcal{F}$ on $n$ qubits, where Bob knows $\mathcal{F}$ but not the specific choice of $U$. There exists a subspace $\mathcal{P}_{\mathcal{F}}$ of the $n$-qubit Pauli operators $\mathcal{P}_n$ such that all members either commute or anti-commute with all members of $\mathcal{F}$:
\begin{align}
		\mathcal{P}_\mathcal{F} = \{P\in \mathcal{P}_n: UPU^\dagger = \pm P\quad\forall U \in \mathcal{F} \}.\label{eq:PF}
\end{align}
We define a second subspace $\mathcal{B}$ comprising the Paulis that commute with all members of $\mathcal{P}_\mathcal{F}$:
\begin{equation}
\mathcal{B} = \{B \in \mathcal{P}_n: BP=PB\quad \forall P \in \mathcal{P}_\mathcal{F}\}.\label{eq:B}
\end{equation}
This is a particularly useful space since all the unitaries in $\mathcal{F}$ have a decomposition in terms of it -- there exist coefficients $\gamma$ such that
$$
U=\sum_x\gamma_xB_x.
$$

Alice can achieve a blind implementation of $U\in\mathcal{F}$ by receiving $\text{dim}(\mathcal{B})=2n-\text{dim}(\mathcal{P}_\mathcal{F})$ qubits from Bob via the following protocol. This is optimal assuming a particular form of padding on the output state.

		\begin{protocol}\label{general_RM_protocol}
		\begin{enumerate}
			\item[]
			\item Alice and Bob agree upon a basis $\mathcal{B}$ for $\mathcal{F}$ and any unitary $V$ on $\text{dim}(\mathcal{B})$ qubits.
			\item Bob runs the circuit in \cref{general_measurement_circuit} and sends the top $\text{dim}(\mathcal{B})$ qubits to Alice.
			\item Alice chooses a $U \in \mathcal{F}$ that she would like to implement.
			\item Alice measures the qubits in the $\{V\ket{\phi_{B_xU}^*}\}$ basis where
			$$
\ket{\phi^*_U}=\sum_z\gamma^*_z\ket{z}
			$$
			for a unitary $U=\sum_z\gamma_zB_z$.
		\end{enumerate}
		\end{protocol}

After the protocol, Alice gets an answer $z\in\{0,1\}^{\text{dim}(\mathcal{B})}$, each equally likely, and hence knows that the gate $B_zU$ has been applied. She considers $B_z$ to be a padding that she adapts to in subsequent steps. Bob never learns $z$ and from his perspective, his initial state has been converted from $\ket{\psi}$ into $\sum_zB_z\proj{\psi}B_z$, independent of anything that Alice has done. We emphasise that Alice never sends anything, even classical information, to Bob once the protocol has begun, which severely limits Bob's attack channels. Simulations of the protocol have been implemented in the case where $\mathcal{F} = \{\identity, C\}$, for a Clifford $C$ \cite{Davies_Communication-Optimal_Blind_Quantum}.

\begin{example}[ex1]
		Let $U$ be the controlled-\textsc{not} gate. It acts on $n=2$ qubits and has a Pauli decomposition of
		$$
		U=\frac{1}{2}\left(\identity+Z_1+X_2-Z_1X_2\right).
		$$
		For the blind gate set $\mathcal{F}=\{\identity,U\}$, we can compute
		$$
		\mathcal{P}_\mathcal{F}=\{ \identity, Z_1, X_2, Z_1X_2  \}=\langle Z_1, X_2\rangle=\mathcal{B}.
		$$
		$\mathcal{P}_\mathcal{F}$ forms a vector space of dimension $r=\dim (\mathcal{P}_\mathcal{F})=2$. We claim (\cref{thm:resource,thm:opt_prot}) that optimal blind application of the controlled-\textsc{not} requires  $2n-r=2$ qubits of communication in either PS or RM. Since $U$ is a Clifford gate, the optimal protocol only uses separable states or single-qubit Pauli basis measurements (\cref{lem:sep_stab}).

		Specifically, Bob has a state $\ket{\psi}$ to which the controlled-\textsc{not} is to be applied. He runs the circuit
		\[
		\begin{quantikz}[wire types={q,q,c}, classical gap = 0.05cm]
		\lstick{$\ket{+}$}&& \ctrl{2} && \ctrl{1}&\\
		\lstick{$\ket{+}$}&&& \ctrl{1}&\ctrl{0} &\\
		\lstick{$\ket{\psi}$}&& \gate{Z_1}&\gate{X_2}&&
		\end{quantikz}
		\]
		and sends the first two qubits to Alice. She either measures them in the $Z$ basis to realise an $\identity$ gate (up to measurement-result-dependent Pauli padding) or in the $X$ basis to realise the controlled-\textsc{not} gate.
		\end{example}

	\section{Pauli Padding}

	Our initial target is to blindly implement a particular gate $U$, i.e.\ to have possibly implemented $U^d$ for $d\in\{0,1\}$ and the goal is for Bob to be unable to determine $d$. This is equivalent to selecting a member of $\mathcal{F}=\{\identity,U\}$ to implement at random, leaving the server none the wiser as to which was chosen. Security will be achieved by causing Bob, who starts with state $\ket{\psi}$, to arrive at a state $VU^d\ket{\psi}$ where $V$ is some padding unitary that Alice knows (and can adapt to in any subsequent protocol). This padding is essential since if Bob were to cheat, he could supply any state $\ket{\psi}$ he wanted in place of the state Alice is expecting, and would be capable of distinguishing between $\identity\ket{\psi}$ and $U\ket{\psi}$ with some non-zero probability. By choosing the possible padding operators $\{V\}$, known as the padding set, such that
	$$
	\sum_Vp^{(\identity)}_VV\proj{\psi}V^\dagger=\sum_Vp^{(U)}_VVU\proj{\psi}U^\dagger V^\dagger
	$$
	for all $\ket{\psi}$, the two cases are indistinguishable to Bob.

	For a gate satisfying $U^k=\identity$ for some positive integer $k$, there is a trivial solution with no quantum communication -- Alice picks a random integer $j$ in the range 0 to $k-1$ and asks Bob to apply $U^j$ to the state $\ket{\psi}$ that he holds. She then interprets this as $U^{j-d}U^d\ket{\psi}$ where $U^{j-d}$ is a padding which must be adapted for in future computations.

	However, while we are calculating the communication bounds for a single gate implementation, we are ultimately interested in implementing a quantum circuit; a sequential application of multiple gates. It is at this point that our trivial scheme falls down. In order for Alice to keep track of the padding from each different gate, she must in fact implement a simulation of the entire quantum computation on her classical computer, rendering all speed-ups null and void.

	Instead, we will assume a specific form of padding, known as Pauli padding, that should be shared by every gate, and whose effects can be efficiently propagated through a circuit. There are certainly intermediate regimes between these two extremes, but the case we consider here is that which has been realised in every blind quantum computing protocol to date. In fact, there are many places where paddings can be applied: the output state that Bob holds after the protocol (the ``output padding''), the input state that Bob holds before the protocol (which is likely already the output of an earlier step), and on the states that Alice and Bob exchange (the ``input padding''). It is only the output padding that we are constraining.

	The padding that we choose comprises $n$-qubit Pauli operators $P_y\in\mathcal{P}_n$ drawn from a probability distribution $\alpha_y$. The $2n$-bit string $y$ may be split into two $n$-bit strings $x$ and $z$ specifying the $X$ and $Z$ components,
	$$
	P_{y} = i^{x^Tz}\bigotimes_{i=0}^n X_i^{x_i}Z_i^{z_i} = i^{x^Tz}X_xZ_z.
	$$
	Previous schemes \cite{childs, broadbentUniversalBlindQuantum2009, fitzsimons2017, giovannetti2013} have used a uniform Pauli padding, where $\alpha_x=\frac{1}{2^{2n}}$. One notable exception, outside the current purview \footnote{This is a case of classical Alice with computational security.}, is \cite{mahadev2020}, where an encrypted controlled-\textsc{not} is achieved with a reduced padding. The main benefit of the Pauli padding is that measurements can be performed by Bob and decoded by Alice. If the padding is $X_xZ_z$ and the measurement outcome is $r$, then the intended measurement result is $r\oplus x$. Furthermore, if Bob is restricted to applying Clifford gates, Alice will be able to track the transformation of the padding through all subsequent steps -- the individual steps, each with Pauli padding output, can be chained together to form a computation.

	Of extensive interest will be the (anti) commutation properties of two operators. We summarise this with
	$$g_ig_j=(-1)^{c(g_i,g_j)}g_jg_i$$
	with values $c(g_i,g_j)=0,1$ conveying commutation and anti-commutation respectively. Given the symplectic matrix $\Omega = \begin{pmatrix}0&\identity_n\\
	\identity_n&0\end{pmatrix}$, the commutation relations between Pauli strings are calculated by
	$$
	c(P_x,P_y) = x^T\Omega y\text{ mod }2.
	$$
	It is straightforward to implement a Clifford gate $C$ on a state with Pauli padding since these are precisely the gates which transform a Pauli operator into another Pauli operator. If we apply the Clifford gate $C$ to a Pauli padded state $P_x\ket{\psi}$, then the outcome we get is
	\begin{align*}
	CP_x\ket{\psi}&=CP_xC^\dagger C\ket{\psi} \\
	&=P_{x'}C\ket{\psi},
	\end{align*}
	our target state $C\ket{\psi}$ with a new padding $P_{x'}=CP_xC^\dagger$. The Clifford property guarantees this to be a Pauli that is easily calculated by anyone who knows $x$ and $C$ (i.e.\ Alice, but not Bob).

	\subsection{Preserved Pauli Subspace}

In \cref{eq:PF}, we introduced the subspace $\mathcal{P}_\mathcal{F}$ that is, up to a phase, unchanged by any of the members of $\mathcal{F}$. Such terms will serve no useful purpose within a padding set, allowing for a reduction in quantum resources for Alice.

		Without loss of generality, we can adjust $\mathcal F$ such that the unitaries always commute with the elements of $\mathcal{P}_\mathcal{F}$. To see this, note that for any $U\in\mathcal{F}$ and $P_x\in \mathcal{P}_\mathcal{F}$, we must have $c(U,P_x)\in\{0,1\}$, and for any Pauli string $P_y$,
		$$
		c(P_yU,P_x)=c(P_y,P_x) \oplus c(U,P_x)
		$$
		so $P_yU$ also shares the same elements of $\mathcal{P}_\mathcal{F}$ \footnote{The implementation of a blind $U$ or $P_yU$ is equivalent because, having implemented one, you can implement the other just by adjusting the padding.}. We can always find a $y$ such that $c(P_yU,P_x)=0$ for all $P_x\in \mathcal{P}_\mathcal{F}$. We therefore take our adjusted $\mathcal{F}$ to be one for which $UP_x=P_xU$ for all $U\in\mathcal{F}$ and $P_x\in \mathcal{P}_\mathcal{F}$. All the members of the updated $\mathcal{F}$ commute with $\mathcal{P}_\mathcal{F}$.

		Consider the Pauli decomposition of $U$, $U=\sum_z\gamma_zP_z$. By definition, for any $P_x\in \mathcal{P}_\mathcal{F}$, $c(U,P_x)=0$, so it must be that $c(P_z,P_x)=0$ for all $x: \gamma_z\neq 0$. This suggests that we should introduce the Pauli subspace $\mathcal{B}$, defined by \cref{eq:B}.	
		If we represent $\mathcal{P}_\mathcal{F}$ by an $r\times 2n$ binary matrix $F$, then $\mathcal{B}$ is represented by the $(2n-r)$-dimensional null space of $F\Omega$. $\mathcal{B}$ has a basis $\langle B_1, \ldots, B_{2n-r}\rangle $ and $B_x = \prod_{i=1}^{2n-r} B_i^{x_i}$ is a general element in $\mathcal{B}$ (this may not be Hermitian). It immediately follows that 
		\begin{lemma}\label{lem:decompose}
		Any $U\in\mathcal{F}$ (for the updated $\mathcal{F}$) has a Pauli decomposition $U=\sum_y \gamma_y B_y$.
		\end{lemma}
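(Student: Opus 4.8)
The plan is to start from the generic Pauli decomposition of $U$ and show that the commutation condition defining the updated $\mathcal{F}$ forces every surviving term into $\mathcal{B}$. Since the $n$-qubit Paulis form an orthogonal basis for the space of operators under the Hilbert--Schmidt inner product, any $U$ may be written uniquely as $U=\sum_z\gamma_z P_z$; the whole task reduces to identifying which $P_z$ can carry a nonzero coefficient.

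First I would conjugate this decomposition by an arbitrary $P_x\in\mathcal{P}_\mathcal{F}$. Writing $P_xP_zP_x^\dagger=(-1)^{c(P_x,P_z)}P_z$ in the (anti)commutation notation already introduced, we obtain
$$
P_x U P_x^\dagger=\sum_z\gamma_z(-1)^{c(P_x,P_z)}P_z.
$$
For the updated $\mathcal{F}$ we have $UP_x=P_xU$, so the left-hand side equals $U$ itself. Comparing coefficients term by term --- legitimate because the Paulis are linearly independent --- yields $\gamma_z(-1)^{c(P_x,P_z)}=\gamma_z$ for every $z$. Hence whenever $\gamma_z\neq 0$ it must be that $c(P_x,P_z)=0$, i.e.\ $P_z$ commutes with $P_x$.

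The one point to handle with care is that this must hold for \emph{every} $P_x\in\mathcal{P}_\mathcal{F}$ simultaneously: running the argument over all such $P_x$ shows that each surviving $P_z$ commutes with all of $\mathcal{P}_\mathcal{F}$, which by \cref{eq:B} is exactly the statement $P_z\in\mathcal{B}$. The support of $U$ therefore lies entirely within $\mathcal{B}$, and re-expressing the sum over a basis $\langle B_1,\ldots,B_{2n-r}\rangle$ gives the claimed form $U=\sum_y\gamma_y B_y$. I do not expect a genuine obstacle here; the only substantive ingredients are the linear independence of the Paulis and the definition of $\mathcal{B}$, both of which do the work almost immediately.
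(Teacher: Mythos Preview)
Your argument is correct and matches the paper's own reasoning essentially line for line: expand $U$ in the Pauli basis, use that the updated $U$ commutes with every $P_x\in\mathcal{P}_\mathcal{F}$, and infer by linear independence that each surviving $P_z$ lies in $\mathcal{B}$. The paper states this more tersely just before the lemma, but the content is identical.
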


There is a phase ambiguity with $\mathcal{P}_\mathcal{F}$, $\mathcal{B}$ as we are not distinguishing $\pm P_x,\pm iP_x$. This will not unduly affect us.
		
		\begin{example}
		To implement a blind Hadamard gate, we consider the set $\mathcal{F}=\{\identity,H\}$, which has $\mathcal{P}_\mathcal{F}=\langle Y\rangle$. Given that $HYH=-Y$, we update $\mathcal{F}\rightarrow\{\identity,XH\}$ such that $XHYHX=Y$.
		The set $\mathcal{P}_\mathcal{F}$ remains unchanged, as does the communication cost.

			The only single-qubit Pauli operator that commutes with $Y$ is $Y$ itself, so $\mathcal{B}=\langle Y\rangle $. We see that $XH=\frac{1}{\sqrt{2}}(\identity+iY)$ is a linear combination of terms from $\mathcal{B}$. 
			\end{example}

			\section{The Blind Quantum Computation Protocol}

			Let $\mathcal{A}_{\mathcal{F}}$ be an interactive protocol between Alice and Bob which we call a \emph{Quantum Gate Protocol}, see \cref{fig:protocol}. Alice aims to get Bob to apply $U^d$ to the state $\rho$ that he initially holds. Both can communicate classically, Bob can perform universal quantum computation and either Alice (PS) or Bob (RM) can prepare and send quantum states. In RM, Alice measures the qubits she receives. In PS, Bob incorporates the qubits into the computation he does. If there is a measurement involved, he sends the results to Alice in the next round of communication. We will explicitly analyse the RM protocol. Only minor alterations are required to apply equivalent reasoning to PS. The conclusions are the same.

			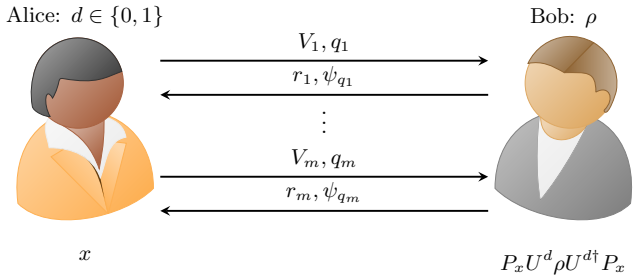
\begin{figure}
			\centering
			\begin{adjustbox}{max width=0.48\textwidth}
			\begin{tikzpicture}[>=stealth]
			\node[alice, minimum size=2cm,label=below:{$x$}, label=above:{Alice: $d \in \{0,1\}$}] (alice) {};
			\node[bob, mirrored, minimum size=2cm, right=5cm of alice, label=below:{$P_x U^d \rho {U^{d\dagger}}P_x$}, label=above:{Bob: $\rho$}] (bob) {};
			\draw[->,thick] (1.1,1) -- (5.9,1) node[midway, above]{$V_1,  q_1$};
			\draw[->,thick]  (5.9,0.5)--(1.1,0.5) node[midway, above]{$r_1, \psi_{q_1}$};

			\node at (3.5,0.2) {\vdots};

			\draw[->,thick] (1.1,-0.7) -- (5.9,-0.7) node[midway, above]{$V_m,  q_m$};
			\draw[->,thick]  (5.9,-1.2)--(1.1,-1.2) node[midway, above]{$r_m, \psi_{q_m}$};
			\end{tikzpicture}
			\end{adjustbox}
			\caption{General Receive \& Measure protocol. Bob starts with his state $\rho$ and additional ancilla states. In each round, Alice sends Bob a message. Bob responds with a classical message and some quantum communication, which Alice measures. By the end, Bob holds an encrypted version of his state with $U^d$ enacted, and Alice knows the encryption key. 
			}
			\label{fig:protocol}
			\end{figure}

			Let  $T_k$ be the transcript of all information Bob has obtained through the first $k$ rounds of the protocol. If no index is given, $T$ refers to all information obtained throughout the protocol. 
			Let $N_i$ be the number of qubits sent in round $i$, so $N = \sum_i N_i$ is the total number of qubits sent during the protocol.

			When the protocol is complete, Bob will have possession of the state $P_x U^d \rho U^{-d} P_x$, where Alice can compute $x$. Bob does not have all the information to compute $x$; based on his transcript $T$, he knows that the padding must be drawn from one of the padding sets $\alpha^{(d)}=\{(\alpha_x,P_x)\}$, where he differentiates based on $d$ since this is what he wants to learn. Our required hiding property is that Bob can gain no advantage in guessing $d \in \{0,1\}$ given his knowledge of the possible final states $\rho'(d) = \sum_x \alpha_x P_x U^d\rho U^{-d} P_x$.

			\begin{lemma}\label{lem:hiding}
			Any blind quantum gate protocol $\mathcal{A}_{\mathcal{F}}$ (with $\identity \in \mathcal{F}$) that has the hiding property must satisfy:
			\begin{align}
			\forall T \forall \rho \quad \sum_{x}\alpha^{(0)}_x P_x \rho P_x = \sum_{x} \alpha^{(1)}_x P_x U \rho U^\dagger P_x \label{hiding property}
			\end{align}
			where $\alpha^{(d)}_x$ are the corresponding distributions for padding sets for the transcript $T$ and gate choice $d$.
			\end{lemma}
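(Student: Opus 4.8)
The plan is to recognise \cref{hiding property} as nothing more than the statement $\rho'(0)=\rho'(1)$, where $\rho'(d)=\sum_x\alpha^{(d)}_x P_x U^d\rho U^{-d}P_x$, and to derive this equality from the definition of the hiding property via a state-discrimination argument, taking care to justify each of the two universal quantifiers separately.

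First I would cast Bob's task as a binary hypothesis test. Conditioned on a fixed transcript $T$, Bob believes he holds one of the two density operators $\rho'(0)$ or $\rho'(1)$, corresponding to $d=0$ and $d=1$, each weighted by his prior. The hiding property asserts that his optimal guessing probability for $d$ cannot beat guessing blindly. By the Holevo--Helstrom theorem, the optimal success probability for discriminating $\rho'(0)$ from $\rho'(1)$ with equal priors is $\tfrac12\big(1+\tfrac12\|\rho'(0)-\rho'(1)\|_1\big)$, so ``no advantage'' forces $\|\rho'(0)-\rho'(1)\|_1=0$, i.e.\ $\rho'(0)=\rho'(1)$. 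This already yields \cref{hiding property} for the particular $\rho$ and $T$ under consideration.

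Next I would dispatch the two quantifiers. For the quantifier over $T$: since the transcript is part of Bob's classical side information and the padding distributions $\alpha^{(d)}_x$ are themselves functions of $T$, a single transcript on which the two conditional states differed would let Bob post-select on it and distinguish; hence the equality must hold for every transcript arising with nonzero probability. For the quantifier over $\rho$: here I would invoke a deviating Bob. Because Bob supplies (or, in PS, absorbs into his computation) the quantum register, nothing prevents him from replacing Alice's intended input with an arbitrary state $\rho^\ast$ of his choosing, and the honest protocol then produces $\rho'(d)$ on that input. Were \cref{hiding property} to fail for some $\rho^\ast$, a cheating Bob would prepare $\rho^\ast$, follow the protocol, and apply the optimal distinguishing measurement to learn $d$ with advantage, contradicting the hiding property. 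Equality is therefore forced for all $\rho$.

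The step I expect to be the main obstacle is making the ``for all $\rho$'' argument watertight, and in particular establishing that each map $\rho\mapsto\rho'(d)$ is a fixed channel independent of the input. The key observation is that the padding distribution $\alpha^{(d)}_x$ depends only on Alice's private randomness together with $d$ and $T$, never on Bob's input, so $\rho\mapsto\rho'(d)$ is a genuine completely positive trace-preserving map --- a mixed-unitary channel composed with $U^d$. Equality of two such maps on every input is then equivalent to the operator identity in \cref{hiding property}, closing the argument. A secondary point to check is that conditioning on $T$ is itself legitimate, namely that the transcript distribution $P(T\mid d)$ does not depend on $d$; but this is already demanded by the hiding property, since any $d$-dependence of the transcript would be directly visible to Bob.
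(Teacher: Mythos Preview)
Your proposal is correct and follows essentially the same approach as the paper: a contrapositive argument in which a deviating Bob prepares an arbitrary $\rho$, obtains some transcript $T$, and then applies a distinguishing measurement if $\rho'(0)\neq\rho'(1)$. The paper compresses this into three sentences, whereas you expand it by invoking Holevo--Helstrom explicitly and treating the two quantifiers and the channel well-definedness separately; the added rigor is welcome but the underlying idea is identical.
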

			\begin{proof}
			Bob chooses a $\rho$ and follows the protocol, generating some transcript $T$. If the pair $\rho,T$ do not satisfy \cref{hiding property}, the possible states from Bob's perspective are not equal. There exists a measurement that yields a non-zero advantage in distinguishing the states, contradicting the hiding property.
			\end{proof}

			Whilst \cref{lem:hiding} cannot guarantee the security of the scheme (Alice could always send $d$ to Bob classically), it is necessary for Bob to gain no advantage in guessing $d$.

			\subsection{Entropy}

			For Bob to not know what is going on in a system that is entirely under his control, Alice will have to introduce uncertainty, either by sending him qubits in states that he does not have complete knowledge of (PS), or by manipulating the correlations to Bob within states that she is sent (RM). We quantify this uncertainty with the Von Neumann entropy \cite{nielsen00},
			\begin{align*}
			S(\rho) = -\Tr(\rho \log_2(\rho)).
			\end{align*}
			The entropy is invariant under unitary transformations, $S(\rho) = S(U\rho U^\dagger)$. If we were to perform a projective measurement on a state $\rho$, obtaining result $i$ with probability $p_i$, then the system is projected onto the state $\rho_i$. On average, the entropy is non-increasing \cite{lindblad1972}:
			\begin{align}
			S(\rho) \geq \sum_{i} p_i S(\rho_i).\label{eq:entropy_measure}
			\end{align}

			To bound the resources needed by Alice, we follow the average entropy throughout the protocol from Bob's perspective. Entropy is only introduced into the system by Alice measuring in a basis and getting results that are unknown to Bob. This is bounded above by the number of qubits that Bob has sent. After round $k$, transcript $T_k$ arises with probability $p(T_{k})$, leading to Bob describing his state as $\rho_{T_k}$. Bob's expected entropy is
			$$
			s_k=\sum_{T_k} p(T_{k}) S(\rho_{T_k}).
			$$

			\begin{lemma}\label{lem:entropy_RM} 
			For any RM protocol, if Bob sends Alice an expected number of qubits $E(N_k)$ in round $k$, then
			\begin{align}
			s_k+ E(N_{k+1}) \geq s_{k+1}. \label{single_line2}
			\end{align}
			\end{lemma}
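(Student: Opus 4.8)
The plan is to track Bob's expected von Neumann entropy as a classical-quantum quantity conditioned on his transcript, and to show that the only operation in round $k+1$ capable of raising it is the transmission of qubits to Alice, whose contribution is controlled by a dimension bound from the Araki--Lieb triangle inequality. I would model Bob's knowledge after round $k$ by the ensemble $\{p(T_k),\rho_{T_k}\}$, so that $s_k=\sum_{T_k}p(T_k)S(\rho_{T_k})$, and then decompose the passage to round $k+1$ into the elementary operations that compose it, checking the sign of the entropy change at each.

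First I would treat the classical message $V_{k+1}$ that Alice sends Bob. Since $V_{k+1}$ is correlated with Alice's hidden data, learning it refines Bob's description, and consistency gives $\rho_{T_k}=\sum_{V}p(V\mid T_k)\rho_{T_k,V}$. Concavity of the von Neumann entropy then yields $S(\rho_{T_k})\ge\sum_V p(V\mid T_k)S(\rho_{T_k,V})$, so conditioning on $V_{k+1}$ cannot increase the expected entropy. Next, Bob's local processing --- appending pure ancillas and applying unitaries --- leaves the entropy invariant, while any measurement he performs to produce the classical reply $r_{k+1}$ is, on average over its outcomes, non-increasing by \cref{eq:entropy_measure} (taking his measurements projective without loss of generality by enlarging with ancillas). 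Hence, writing $\omega_{T_k,V,r}$ for Bob's joint state on his retained system $A$ together with the $N_{k+1}$ qubits $B$ destined for Alice, I obtain $\mathbb{E}_{T_k,V,r}[S(\omega_{T_k,V,r})]\le s_k$.

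The one step that raises the entropy is Bob sending $B$ to Alice: from his perspective this traces out $B$, and Alice's subsequent measurement of those qubits leaves his reduced state $\rho_{T_{k+1}}=\mathrm{Tr}_B\,\omega_{T_k,V,r}$ untouched since its outcome is unknown to him. The triangle inequality gives $S(\mathrm{Tr}_B\,\omega)\le S(\omega)+S(\mathrm{Tr}_A\,\omega)\le S(\omega)+N_{k+1}$, the last bound holding because $\mathrm{Tr}_A\,\omega$ lives on $N_{k+1}$ qubits. Taking the expectation over $(T_k,V,r)=T_{k+1}$ and invoking the previous paragraph then chains to $s_{k+1}\le s_k+E(N_{k+1})$, which is \cref{single_line2}.

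I expect the main obstacle to be bookkeeping rather than a single hard estimate: one must order the sub-operations correctly and verify that every inequality points the same way --- conditioning on Alice's classical message and on Bob's own measurement outcomes must decrease or preserve the expected entropy, while only the physical loss of $B$ increases it, and that increase must be capped by the qubit count rather than by Bob's full Hilbert-space dimension. The subtlety is ensuring the final bound uses $S(\mathrm{Tr}_A\,\omega)\le N_{k+1}$, the size of the transmitted register, and not the size of Bob's retained system, and confirming that Alice's measurement of the received qubits genuinely has no effect on Bob's marginal.
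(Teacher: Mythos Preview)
Your proposal is correct and follows essentially the same route as the paper's own proof: both decompose round $k{+}1$ into (i) conditioning on Alice's classical message via concavity, (ii) Bob's unitaries/ancillas leaving entropy invariant, (iii) Bob's projective measurement being non-increasing on average via \cref{eq:entropy_measure}, and (iv) tracing out the transmitted register, controlled by the Araki--Lieb triangle inequality together with the dimension bound $S(\text{transmitted})\le N_{k+1}$. The only cosmetic difference is that the paper carries the $E(N_{k+1})$ term additively from the outset, whereas you first establish $\mathbb{E}[S(\omega)]\le s_k$ and then invoke the triangle inequality; the content is identical.
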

			Applied iteratively, this ultimately conveys that
			\begin{align}
			S(\rho) + E(N) \geq \sum_T p(T) S(\rho_T)=s_\text{final},\label{entropy_bound}
			\end{align}
			i.e.\ Bob's uncertainty about his final state derives from any uncertainty in the state he starts from coupled with the uncertainty introduced by giving parts of his correlated states to Alice.

			\begin{proof}
			At the end of round $k$, Bob holds the transcript $T_k$ and a state $\rho_{T_k}$. Let $m$ be the message sent from Alice in round $k+1$. From Bob's perspective, this message occurs with probability $p(m|T_{k})$ and contains the following pieces of information; a unitary $V_m$, a set of measurements to perform, a subset of $n_m$ qubits $A_m$ that Bob must send to Alice. Here we let $B_m$ be the remaining qubits that will remain with Bob.
			\begin{align*}
			s_{k}+ E(N_{k+1})	&= \sum_{T_{k}} p(T_{k})(S(\rho_{T_{k}})+E(N_{k+1}|T_{k}))\\
			&= \sum_{T_{k}} p(T_{k})(S(\rho_{T_{k}})+\sum_m p(m|T_k) n_m)
			\end{align*}
			This message could change Bob's perspective on his current state $\rho_{T_k}$ into a new state $\rho_{T_k, m}$. We must have  $\rho_{T_k} = \sum_{m} p(m|T_k) \rho_{T_k, m}$, unto which we can apply the concavity of the entropy:
			\begin{align*}
			s_k	+ E(N_{k+1})	& \geq \sum_{T_k,m} p(T_k,m)\left( S(\rho_{T_k,m} ) + n_m \right) \\
			&= \sum_{T_k,m}p(T_k,m) (S( V_m \rho_{T_k,m} V_m^\dagger )+n_m).
			\end{align*}
			This includes the unitary $V_m$ that Bob applied at Alice's behest. We must also include the measurements. Let $r$ be the possible measurement result and $\rho_{T_k, m, r}$ be the resulting state. The combination of $T_k,m$ and $r$ becomes the new transcript $T_{k+1}$.
\begin{multline*}
s_k+ E(N_{k+1}) \\
\begin{aligned}
&\geq\sum_{T_k,m}p(T_k,m)\left(\sum_{r} p(r| T_k, m) S( \rho_{T_k, m, r}) + n_m  \right)\\
&= \sum_{T_{k+1}} p(T_{k+1})  \left( S({\rho_{T_{k+1}}})+n_m \right)
\end{aligned}
\end{multline*}

			The final stage of the round is for Bob to send the $A_m$ subspace of the state to Alice. Given that $S(A) \leq n_m$,
			\begin{align*}
			 s_k+ E(N_{k+1})&\geq  \sum_{T_{k+1}} p(T_{k+1})  \left( S(\rho_{T_{k+1}}^{(A_m B_m)})+S(\rho_{T_{k+1}}^{(A_m)}) \right).\\
			 \intertext{The Araki-Lieb inequality $S(AB)+S(A) \geq S(B)$ \cite{araki1970entropy} gives}
			&\geq  \sum_{T_{k+1}} p(T_{k+1})  S(\rho_{T_{k+1}}^{(B_m)})\\
			&= s_{k+1}.
			\end{align*}
			\end{proof}

This proof gives us two key insights into what an optimal protocol must look like. We require that for any step where entropy is non-increasing, the entropy remains constant. For the Araki-Lieb inequality to be tight, states given away by Bob must be maximally entangled to the state he currently holds. For any measurement performed by Bob, information should not be gained. He must measure with a mutually unbiased basis. 
			
			\subsection{Properties of Padding Sets}

The tale of \cref{lem:entropy_RM} is that in sending qubits, Alice is able to create the entropy required in order to obfuscate what she wants Bob to do. This entropy must be mapped into the padding set in a way that satisfies \cref{lem:hiding}. We can now begin to place more restrictions on the properties of such padding sets, with the aim of understanding what is required of any successful protocol so that we can create and recognise one that saturates \cref{lem:entropy_RM}.

			\begin{lemma}\label{lem:hiding_rule}
			If $P_y\in \mathcal{P}_n \setminus \mathcal{P}_\mathcal{F}$, then \footnote{For $|\mathcal{F}|>2$, we might just take $d$ to index which unitary is chosen, with $d=0$ always corresponding to $\identity$.}:
			\begin{align}
			\forall T,d, \quad \sum_x \alpha^{(d)}_x (-1)^{x^T \Omega y} =0. \label{hiding lemma rule}
			\end{align} 
			\end{lemma}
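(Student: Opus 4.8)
The plan is to recognise the quantity $\sum_x \alpha^{(d)}_x (-1)^{x^T\Omega y}$ as a channel eigenvalue and then exploit the hiding identity of \cref{lem:hiding}. Write $\mathcal{E}^{(d)}(\sigma)=\sum_x\alpha^{(d)}_x P_x\sigma P_x$ for the padding channel associated with gate choice $d$ (with $U_0=\identity$). Since $P_xP_yP_x=(-1)^{x^T\Omega y}P_y$, each Pauli $P_y$ is an eigenoperator, $\mathcal{E}^{(d)}(P_y)=\lambda^{(d)}_y P_y$ with $\lambda^{(d)}_y=\sum_x\alpha^{(d)}_x(-1)^{x^T\Omega y}$. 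Thus the lemma is exactly the assertion that $\lambda^{(d)}_y=0$ whenever $P_y\notin\mathcal{P}_\mathcal{F}$.

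Next I would turn the hiding identity of \cref{lem:hiding}, relating the choices $d=0$ and a general $d$, into the superoperator equation $\mathcal{E}^{(0)}=\mathcal{E}^{(d)}\circ\mathcal{U}_d$, where $\mathcal{U}_d(\sigma)=U_d\sigma U_d^\dagger$; this holds on all operators, not just states, because density operators span the operator space and both sides are linear. Conjugation is an orthogonal change of Pauli basis, so I expand $U_dP_wU_d^\dagger=\sum_v M^{(d)}_{vw}P_v$ with $M^{(d)}$ a real orthogonal matrix ($M^{(d)}_{vw}=\frac{1}{2^n}\Tr(P_v U_dP_wU_d^\dagger)$). Feeding $\sigma=P_w$ into the identity and reading off the coefficient of $P_y$ (by linear independence of the Paulis) gives the scalar relation $\lambda^{(0)}_y\,\delta_{yw}=\lambda^{(d)}_y M^{(d)}_{yw}$.

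The heart of the argument is the off-diagonal case $w\neq y$, which reads $\lambda^{(d)}_y M^{(d)}_{yw}=0$. If $P_y\notin\mathcal{P}_\mathcal{F}$ then some member $U_{d'}\in\mathcal{F}$ satisfies $U_{d'}P_yU_{d'}^\dagger\neq\pm P_y$; because $M^{(d')}$ is orthogonal, this is equivalent to row $y$ of $M^{(d')}$ not being $\pm e_y$, so there is a $w\neq y$ with $M^{(d')}_{yw}\neq0$, forcing $\lambda^{(d')}_y=0$. The diagonal relation $\lambda^{(0)}_y=\lambda^{(d')}_y M^{(d')}_{yy}$ then gives $\lambda^{(0)}_y=0$. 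Finally, rewriting the identity in the reverse direction ($\sigma\mapsto U_d^\dagger\sigma U_d$) yields $\mathcal{E}^{(d)}=\mathcal{E}^{(0)}\circ\mathcal{U}_d^\dagger$, so $\lambda^{(d)}_y$ is a scalar multiple of $\lambda^{(0)}_y=0$ and hence vanishes for every $d$, which is the claim. For $|\mathcal{F}|=2$ this last bootstrap is unnecessary, since $d'=1$ is the only non-identity member and one reads $\lambda^{(0)}_y=\lambda^{(1)}_y=0$ directly.

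I expect the main obstacle to be the geometric translation in the third paragraph: converting the (anti)commutation-based membership condition ``$P_y\notin\mathcal{P}_\mathcal{F}$'' into the statement that some off-diagonal entry $M^{(d')}_{yw}$ is nonzero. The clean way to see this is that $U_{d'}P_yU_{d'}^\dagger$ is a Hermitian unitary whose Pauli coefficients are real and square-sum to $1$, so its $P_y$-coefficient has modulus $1$ precisely when $U_{d'}P_yU_{d'}^\dagger=\pm P_y$; otherwise mass must leak onto some other Pauli. The remaining subtlety is bookkeeping: one checks that ``preservation'' can only mean $\pm P_y$ (a $\pm i$ phase is impossible because $U_{d'}P_yU_{d'}^\dagger$ is Hermitian), matching the definition of $\mathcal{P}_\mathcal{F}$ exactly, and that the reverse-direction bootstrap correctly covers those $d$ for which $U_dP_yU_d^\dagger$ happens to equal $\pm P_y$.
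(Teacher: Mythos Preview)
Your proof is correct. Both you and the paper exploit the same core observation: the hiding identity, read off Pauli-by-Pauli, forces the channel eigenvalue $\lambda^{(d)}_y$ to vanish once some $U_{d'}\in\mathcal{F}$ fails to fix $P_y$ up to sign. The difference is presentational. The paper works state-by-state, choosing specific probes $\rho=\tfrac12(\identity+U^\dagger P_{z'}U)$ and then $\rho=\tfrac12(\identity+U^\dagger P_y U)$, and extracts coefficients by tracing against a Pauli; you instead linearise \cref{lem:hiding} to the full operator space up front, recast it as the single matrix relation $\lambda^{(0)}_y\delta_{yw}=\lambda^{(d)}_y M^{(d)}_{yw}$, and read everything off from that. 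Your route is a little more economical --- the paper's two separate state choices correspond exactly to your off-diagonal ($w\neq y$) and diagonal ($w=y$) cases of one identity --- and the orthogonality of $M^{(d)}$ gives a clean geometric reason why non-membership in $\mathcal{P}_\mathcal{F}$ forces an off-diagonal entry, where the paper argues the same via $\sum_z\lambda_z^2=1$ with $\lambda_y\neq\pm1$. Neither approach buys anything the other cannot, but your Pauli-transfer-matrix framing makes the structure more transparent.
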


			\begin{example}[ex2]
			In the case of uniform Pauli padding, $\alpha_x=\frac{1}{4^n}$ for all $x\in\{0,1\}^{2n}$. For all $P_y\in\mathcal{P}_n$ (other than $\identity\in \mathcal{P}_\mathcal{F}$), exactly half of the Pauli operators commute, and half anti-commute, so $\sum_x(-1)^{x^T\Omega y}=0$. \cref{lem:hiding_rule} is satisfied.
			\end{example}

			\begin{proof}

			We prove the statement in two parts, firstly for the padding associated with $d=0$, the identity gate, and then for every other $d\neq 0$.

			Let $P_y \in \mathcal{P}_n \setminus \mathcal{P}_{\mathcal{F}}$ and choose $U$ such that $UP_yU^\dagger = \sum_z \lambda_{z} P_z \neq \pm P_y$ with $\lambda_{z} = \frac{1}{2^n}\Tr(P_z U P_y U^\dagger) \in \mathbb{R}$. 
			Since $P_y$ is traceless, and an involution, we must have $\lambda_{0} = 0$ and $\sum_z \lambda_{z}^2 =1$. As $\lambda_{y}\neq \pm 1, \exists z'\notin \{0,y\}$ such that $\lambda_{z'}\neq 0$.

			We now consider the hiding property in the instance where Bob chooses the state $\rho = \frac{1}{2}(\identity + U^\dagger P_{z'} U )$.
			We can use the decomposition $U^\dagger P_{z'} U = \sum_{y'} \lambda'_{y'} P_{y'}$. Using \cref{lem:hiding}, we have:
			\begin{align*}
			\sum_x \alpha^{(0)}_x P_x\left( \sum_{y'} \lambda'_{y'} P_{y'}\right) P_x &= \sum_x \alpha^{(d')}_x P_x P_{z'}P_x.
			\end{align*}
			Commuting the terms though each other resolves to
			\begin{align*}
			\sum_{y'} \lambda'_{y'}P_{y'} \left(\sum_x \alpha^{(0)}_x (-1)^{{y'}^T \Omega x}\right) &= P_{z'}\left(\sum_x \alpha^{(d')}_{x} (-1)^{{z'}^T\Omega x} \right).
			\end{align*}
			Multiplying by $P_y$ and taking the trace yields the conclusion $\sum_x \alpha^{(0)}_x (-1)^{{y}^T \Omega x}=0$ since $\lambda'_y=\lambda_{z'}\neq 0$.

			We can now show that all other padding sets $\alpha_x^{(d)}$, corresponding to any non-$\identity$ $U\in\mathcal{F}$ ($d\neq 0$), must also have the same property.
			This time, Bob will choose $\rho = \frac{1}{2} (\identity + U^\dagger P_y U)$ and let $U^\dagger P_y U = \sum_z \delta_z P_z$. Again,
			\begin{align*}
			\sum_{z} \delta_{z}P_{z} \left(\sum_x \alpha^{(0)}_x (-1)^{{z}^T \Omega x}\right) &= P_{y}\left(\sum_x \alpha^{(d)}_{x} (-1)^{{y}^T\Omega x} \right).
			\end{align*}
			The coefficient of $P_y$ on the left hand side is $0$ so we must have $\sum_x \alpha^{(d)}_{x} (-1)^{{y}^T\Omega x}=0$.
			\end{proof}

			\begin{example}[ex:x]
			Consider the gate set $\mathcal{F}=\{\identity,HS\}$. $HS$ has the action of cycling through the Paulis
			$$
			Z\rightarrow X\rightarrow -Y,
			$$
			so $\mathcal{P}_\mathcal{F}$ is empty aside from the trivial $\identity$. For an initial state $\frac12(\identity+Z)$, we need the paddings $\alpha^{(0)}$ and $\alpha^{(1)}$ to disguise the difference between $\frac12(\identity+Z)$ and $\frac12(\identity+X)$, meaning
			$$
			\sum_x\alpha^{(0)}_x(-1)^{(0,1)\Omega x}=0,\qquad\sum_x\alpha^{(1)}_x(-1)^{(1,0)\Omega x}=0.
			$$
			Repeating for the states $\frac12(\identity+X)$ and $\frac12(\identity+Y)$, we get enough information to solve for all the $\alpha_x^{(d)}=\frac{1}{4}$. The uniform padding is the only Pauli padding option.
			\end{example}

\begin{lemma}\label{lem:blind_padding}
For any blind gate set $\mathcal{F}$, $$\alpha_x =\begin{cases}
			\frac{1}{2^{2n-r}} & P_x \in \mathcal{B}\\ 0 & P_x \notin \mathcal{B}
			\end{cases}  $$ forms a valid padding set, satisfying \cref{lem:hiding_rule}.
\end{lemma}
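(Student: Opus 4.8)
The plan is to verify two things: that $\alpha$ is a genuine probability distribution, and that it satisfies the vanishing character sum of \cref{lem:hiding_rule} for every $d$ (taking $\alpha^{(d)}_x=\alpha_x$ for all $d$). Normalisation is immediate: since $\dim\mathcal{B}=2n-r$, there are exactly $2^{2n-r}$ Pauli strings in $\mathcal{B}$, each assigned weight $2^{-(2n-r)}$, so $\sum_x\alpha_x=1$.

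The substance is the character sum. For $P_y\in\mathcal{P}_n$ I would compute
$$
\sum_x \alpha_x (-1)^{x^T\Omega y}=\frac{1}{2^{2n-r}}\sum_{x:\,P_x\in\mathcal{B}}(-1)^{x^T\Omega y}.
$$
The map $x\mapsto(-1)^{x^T\Omega y}$ is a homomorphism from the additive group $\mathcal{B}$ (viewed as a binary subspace of $\mathbb{F}_2^{2n}$) into $\{\pm1\}$, so by orthogonality of characters the sum equals $|\mathcal{B}|$ if this character is trivial on $\mathcal{B}$ and $0$ otherwise. The character is trivial precisely when $x^T\Omega y=0$ for all $x$ with $P_x\in\mathcal{B}$, i.e.\ when $P_y$ commutes with every element of $\mathcal{B}$.

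It therefore remains to show that the Paulis commuting with all of $\mathcal{B}$ are exactly $\mathcal{P}_\mathcal{F}$; this is the step I expect to carry the weight of the argument. One direction is free: by the definition of $\mathcal{B}$ in \cref{eq:B}, every element of $\mathcal{B}$ commutes with every element of $\mathcal{P}_\mathcal{F}$, so $\mathcal{P}_\mathcal{F}$ is contained in the commutant of $\mathcal{B}$. For the reverse I would use the null-space description: representing $\mathcal{P}_\mathcal{F}$ by the rank-$r$ matrix $F$, we have $\mathcal{B}=\ker(F\Omega)$, so the commutant of $\mathcal{B}$ is the set of $y$ with $\Omega y\in\ker(F\Omega)^\perp=\operatorname{rowspace}(F\Omega)$. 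Since $\Omega$ is invertible and symmetric, unwinding $\Omega y=\Omega F^T w$ shows this set is exactly $\operatorname{rowspace}(F)$, of dimension $r$. A dimension count ($r$ on both sides) then forces the commutant of $\mathcal{B}$ to equal $\mathcal{P}_\mathcal{F}$. Consequently the character sum vanishes for every $P_y\in\mathcal{P}_n\setminus\mathcal{P}_\mathcal{F}$, which is precisely \cref{lem:hiding_rule}.

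Finally, to justify that this is a \emph{valid} padding set rather than one merely obeying the necessary condition, I would note that with the single distribution $\alpha$ used for all $d$, the channel $\rho\mapsto\sum_x\alpha_x P_x\rho P_x$ is, by the character computation above, the projector onto the $\mathcal{P}_\mathcal{F}$-components of $\rho$. Because the adjusted $\mathcal{F}$ satisfies $UP=PU$ for all $P\in\mathcal{P}_\mathcal{F}$, this projector commutes with conjugation by $U$: the $\mathcal{P}_\mathcal{F}$ part is fixed, while any component outside $\mathcal{P}_\mathcal{F}$ maps under $U$ to operators carrying no $\mathcal{P}_\mathcal{F}$ weight (by orthogonality of Paulis under the trace inner product). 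This yields the hiding identity of \cref{lem:hiding} directly.
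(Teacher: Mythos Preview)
Your argument is correct and structurally matches the paper's: both reduce the character sum to showing that every $P_y\notin\mathcal{P}_\mathcal{F}$ anti-commutes with some element of $\mathcal{B}$, whence exactly half of $\mathcal{B}$ anti-commutes and the sum vanishes. The only substantive difference is in how that anti-commuting element is exhibited. The paper argues directly from the gate set: $P_y\notin\mathcal{P}_\mathcal{F}$ means some $U\in\mathcal{F}$ fails to commute with $P_y$, and since $U=\sum_z\gamma_zB_z$ by \cref{lem:decompose}, at least one $B_{z'}$ must anti-commute with $P_y$. You instead compute the symplectic commutant of $\mathcal{B}=\ker(F\Omega)$ by duality and identify it with $\operatorname{rowspace}(F)=\mathcal{P}_\mathcal{F}$. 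Your route is clean linear algebra and avoids invoking the decomposition lemma; the paper's route is more elementary and keeps the connection to the gates in $\mathcal{F}$ explicit. Your final paragraph, verifying the full hiding identity of \cref{lem:hiding}, goes beyond what the lemma actually asks (it only requires \cref{lem:hiding_rule}), but the argument is correct and a worthwhile observation.
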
\begin{proof}	
			Consider the case $P_y \notin \mathcal{P}_\mathcal{F}$, which implies that $\exists U = \sum_z \gamma_z B_z \in \mathcal{F}$ such that $UP_y \neq P_yU$. Since each term $B_z$ either commutes or anti-commutes with $P_y$, but they cannot all commute, there is a $z'$ where $B_{z'}$ anti-commutes with $P_y$. This implies exactly half of $\mathcal{B}$ anti-commutes with $P_y$ and we have $\sum_x \alpha_x (-1)^{x^T \Omega y}=0$.
			\end{proof}
			We will see that the use of this padding set in comparison to uniform Pauli padding provides the optimal solution.

			\section{Communication Lower-Bound}

			With \cref{lem:hiding_rule}, we can now lower bound the expected number of qubits for any blind computation protocol.

			\begin{theorem}[Resource bound]\label{thm:resource}
			Let $\mathcal{F}$ be a set of quantum gates acting on $n$ qubits and $r = \dim(\mathcal{P}_\mathcal{F})$. For any set of Pauli-padded blind gate protocols, where Alice has access to either PS or RM, the expected number of qubits of communication is bounded by: 
			\begin{align*}
			E(N) \geq 2n-r.
			\end{align*}
			\end{theorem}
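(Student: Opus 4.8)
The plan is to turn the entropy inequality of \cref{lem:entropy_RM} into the stated qubit count by feeding it the constraint from \cref{lem:hiding_rule}. Since \cref{lem:entropy_RM} gives $S(\rho)+E(N)\geq s_\text{final}$ and Bob is free to pick his input, I would have him start from a \emph{pure} state, killing the $S(\rho)$ term so that $E(N)\geq s_\text{final}$. Everything then hinges on forcing the expected final entropy $s_\text{final}$ up to $2n-r$. As an $n$-qubit output carries at most $n$ bits of entropy, the essential trick is for Bob to entangle the target register with an additional $n$-qubit reference register that he holds but the protocol never touches: he chooses $\rho=\proj{\Phi}$ with $\ket{\Phi}$ maximally entangled across the two registers, still pure so that $S(\rho)=0$, while leaving room for up to $2n$ bits of final entropy.

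I would then identify Bob's final state on the (target $+$ reference) system. Up to the transcript, it is the twirled state
$$
\rho'=\sum_x \alpha_x\,(P_xU^d\otimes\identity)\proj{\Phi}(U^{-d}P_x\otimes\identity)=\sum_x\alpha_x\proj{\Phi_x},\qquad \ket{\Phi_x}=(P_xU^d\otimes\identity)\ket{\Phi}.
$$
Using $\bra{\Phi}(M\otimes\identity)\ket{\Phi}=\tfrac{1}{2^n}\Tr(M)$ one checks $\braket{\Phi_x}{\Phi_{x'}}=\tfrac{1}{2^n}\Tr(P_xP_{x'})=\delta_{x,x'}$, so the $\ket{\Phi_x}$ are orthonormal and the eigenvalues of $\rho'$ are exactly the padding probabilities. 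Hence $S(\rho')=H(\alpha)=-\sum_x\alpha_x\log_2\alpha_x$, reducing the whole problem to a lower bound on the Shannon entropy of the padding distribution, for which \cref{lem:hiding_rule} is tailor-made.

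The core step is that bound on $H(\alpha)$. Writing $\hat\alpha(y)=\sum_x\alpha_x(-1)^{x^T\Omega y}$, \cref{lem:hiding_rule} states that $\hat\alpha(y)=0$ for every $P_y\in\mathcal{P}_n\setminus\mathcal{P}_\mathcal{F}$, i.e.\ the symplectic Fourier transform of $\alpha$ lives entirely on $\mathcal{P}_\mathcal{F}$. Inverting over $\mathbb{F}_2^{2n}$ gives $\alpha_x=2^{-2n}\sum_{y:\,P_y\in\mathcal{P}_\mathcal{F}}\hat\alpha(y)(-1)^{x^T\Omega y}$, and since $|\hat\alpha(y)|\leq\sum_x\alpha_x=1$ while $\mathcal{P}_\mathcal{F}$ has $2^r$ elements, every probability satisfies $\alpha_x\leq 2^r/2^{2n}=2^{-(2n-r)}$. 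Thus $-\log_2\alpha_x\geq 2n-r$ on the support and $H(\alpha)\geq 2n-r$. Chaining, $E(N)\geq s_\text{final}=S(\rho')=H(\alpha)\geq 2n-r$; since \cref{lem:hiding_rule} holds for every gate choice $d$, the bound is insensitive to which $U\in\mathcal{F}$ Alice picked. The PS case follows by the ``minor alterations'' to the entropy accounting already noted, with the remaining steps unchanged.

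The step I expect to demand the most care is the passage from $s_\text{final}$ to $S(\rho')$: I must argue that the reference register genuinely counts as part of Bob's retained final state (so the maximally entangled input really inflates his entropy to $2n-r$ rather than the $n$ bits available on the target alone) and that any ancillas Bob introduces cannot pull $s_\text{final}$ below $S(\rho')$. By contrast the Fourier estimate $\alpha_x\leq 2^{-(2n-r)}$ is the clean algebraic heart of the argument, but it only does its work once the entropy has been correctly pinned to $H(\alpha)$ on the enlarged system.
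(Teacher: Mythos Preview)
Your argument is correct and uses the same scaffolding as the paper---the entropy inequality of \cref{lem:entropy_RM}, the reference-register trick, and \cref{lem:hiding_rule}---but you route through a genuinely different choice of input state and a different endgame. The paper does \emph{not} start from a pure maximally entangled $\ket{\Phi}$. Instead it picks a mixed $2n$-qubit state
\[
\rho=\frac{1}{2^{2n}}\prod_{i=r+1}^{2n}(\identity+P_i'\otimes P_i'),\qquad S(\rho)=r,
\]
where $\{P_i'\}_{i=1}^{2n}$ extends a basis of $\mathcal{P}_\mathcal{F}$ to all of $\mathcal{P}_n$. Because every non-identity term $P_y'\otimes P_y'$ in this product has $P_y'\notin\mathcal{P}_\mathcal{F}$ by construction, \cref{lem:hiding_rule} kills it outright and the post-protocol state is exactly $\tfrac{1}{2^{2n}}\identity_{2n}$, giving $r+E(N)\geq 2n$ in one line with no Fourier step. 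Your approach instead lands on $S(\rho')=H(\alpha)$ and then needs the inversion bound $\alpha_x\leq 2^{-(2n-r)}$ to finish. That extra step is not wasted effort: it gives you the pointwise structural constraint on the padding distribution (and hence a support-size bound $\geq 2^{2n-r}$) that the paper's argument does not surface. Conversely, the paper's choice sidesteps your ``most care'' step entirely---since its final state is the full maximally mixed state on $2n$ qubits, any residual ancillas on Bob's side can only add entropy, so there is nothing delicate to argue. Your version must (and you correctly note this) rely on the fact that correctness forces the target register to be pure conditioned on $x$, so the $\ket{\Phi_x}$ are orthogonal and Bob's full final state is block-diagonal in $x$, whence $s_\text{final}\geq H(\alpha^{(T)})$ transcript-by-transcript.
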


			\begin{proof}
			We prove this by choosing a special $\rho$ for \cref{entropy_bound} for which $S(\rho)=r$, $\rho_T = \frac{1}{2^{2n}} \identity_{2n}\ \forall T$ and we choose $d=0$. Specifically, let $\{P_i'\}_{i=1}^r$ be a basis of $\mathcal{P}_\mathcal{F}$, and extend this basis to the full $n$ qubit space, $\{P_{i}'\}_{i=1}^{2n}$. We define the $2n$-qubit state
			$$
			\rho=\frac{1}{2^{2n}}\prod_{i=r+1}^{2n}(\identity_{2n}+P_i'\otimes P_i'),
			$$
			which has $S(\rho)=r$ by design.

			Alice and Bob follow the protocol, with Bob applying the actions specified by Alice on the first $n$ qubits.
			At the end of the protocol with transcript $T$, the state from Bob's perspective is
			\begin{align*}
			\rho'_T
			&= \frac{1}{2^{2n}} \sum_{y} P_y' \otimes P_y' \left( \sum_x \alpha_x (-1)^{y^T \Omega x} \right) .
			\end{align*}
			Any non-$\identity$ term is of the form $P_y'\otimes P_y'$ with $P_y'\notin \mathcal{P}_\mathcal{F}$, so we can apply \cref{lem:hiding_rule}, leaving only
			\begin{align*}
			\rho'_T&= \frac{1}{2^{2n}} \identity_{2n}  \left( \sum_x \alpha_x (-1)^{\vec{0} \Omega x} \right) \\
			&= \frac{1}{2^{2n}} \identity_{2n}.
			\end{align*}
			Consequently, $\forall T, \, S(\rho_T) = 2n$ and using \cref{entropy_bound}, we have
			$
			r+E(N) \geq 2n
			$, as desired.
			\end{proof}

			Typically in blind quantum computation protocols \cite{broadbentUniversalBlindQuantum2009, giovannetti2013}, Alice is restricted to sending separable states. Our proof does not impose this restriction. This optimality may require Alice to prepare arbitrary (entangled) quantum states for PS protocols or measure in arbitrary bases for RM protocols, on up to $2n-r$ qubits.

			\section{Communication-Optimal Blind Gate Protocols}
			We have shown that for any given gate set $\mathcal{F}$ on $n$ qubits, any blind quantum protocol that hides the action of any gate in $\mathcal{F}$ under a Pauli padding, requires that Bob sends at least $2n-r$ qubits in the process. 
			In this section, we will show that this bound can be saturated.

			\begin{definition}\label{def:standard}
			The \emph{standard transformation} of $\mathcal{B}$ is a Clifford unitary $V_\textnormal{st}$ which acts as \footnote{We have assumed here that $n\leq 2n-r$. Otherwise, the additional $\otimes\identity$ term moves from the LHS of the equation to the RHS. This does not affect the results.}
			$$
			V_\textnormal{st}\left(B_i\otimes\identity \right)V_\textnormal{st}^\dagger=X_iZ_{c_i}
			$$
			where $c_i=[c_{i,1},c_{i,2},\ldots c_{i,i-1},0,0,\ldots 0]$ has $2n-r$ elements with $c(B_i,B_j)=c_{i,j}$.
			\end{definition}
		The $Z_{c_i}$ terms are chosen in such a way that the (anti) commutation properties of the $B_i$ are preserved and, moreover, that $V_\textnormal{st}B_xV_\textnormal{st}^\dagger\ket{0}^{\otimes(2n-r)}=\ket{x}$ since all the $Z$ terms of a given qubit $i$ appear to the right of the $X_i$ and have a trivial action on $\ket{0}$.
		
		\begin{example}
		The set $\mathcal{F}=\{\identity,HS\}$ has a trivial $\mathcal{P}_\mathcal{F} = \{\identity\}$. Everything commutes with this, so we can select, for instance, $\mathcal{B}=\langle X,Z\rangle$. In this case, the standard transformation has
		$$
		V_\textnormal{st}(B_1\otimes\identity)V_\textnormal{st}^\dagger=X_1,\qquad V_\textnormal{st}(B_2\otimes\identity)V_\textnormal{st}^\dagger=Z_1X_2
		$$
		and can be implemented by the Clifford circuit
		\[
		\begin{quantikz}
		\lstick{$B_i$}& \targ{} & & \\
		\lstick{$\identity$}&\ctrl{-1} & \gate{H} &
		\end{quantikz}
		\]
		\end{example}

		\begin{lemma}\label{lemma:basis}
		Let $U = \sum_z \gamma_z B_z$ be a unitary and $\ket{\phi_{U}} = \sum_z \gamma_z \ket{z} = V_{\textnormal{st}} U V_{\textnormal{st}}^\dagger \ket{0}^{\otimes 2n-r}$ be the corresponding state. $\{\ket{\phi_{B_x U} }\}$ forms an orthonormal basis.
		\end{lemma}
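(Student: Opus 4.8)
The plan is to avoid grinding through the explicit action of $V_{\textnormal{st}}$ and instead to recognise that the assignment $U\mapsto\ket{\phi_U}$ is nothing more than reading off the Pauli coefficients of $U$, and is therefore an isometry onto the $(2n-r)$-qubit space. First I would record that distinct elements of $\mathcal{B}$ are orthonormal under the normalised Hilbert--Schmidt inner product, $\frac{1}{2^n}\Tr(B_{z'}^\dagger B_z)=\delta_{z,z'}$: if $z\neq z'$ then $B_{z'}^\dagger B_z\propto B_{z\oplus z'}$ is a non-identity Pauli and hence traceless, while $B_z^\dagger B_z=\identity$. Since $\ket{\phi_U}=\sum_z\gamma_z\ket{z}$ merely collects the coefficients of $U=\sum_z\gamma_zB_z$, this orthonormality yields, for any two elements $U=\sum_z\gamma_zB_z$ and $U'=\sum_z\gamma'_zB_z$ of $\operatorname{span}(\mathcal{B})$,
\begin{align*}
\braket{\phi_{U'}}{\phi_U}=\sum_z\overline{\gamma'_z}\gamma_z=\frac{1}{2^n}\Tr\!\left(U'^\dagger U\right).
\end{align*}

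The key step is then to apply this identity with $U'=B_{x'}U$ and $U=B_xU$. Before doing so I would check these are legitimate inputs: $B_x\in\mathcal{B}$ and $U\in\operatorname{span}(\mathcal{B})$, and since $\mathcal{B}$ is closed under multiplication up to a phase, $B_xU$ is again a unitary lying in $\operatorname{span}(\mathcal{B})$, so $\ket{\phi_{B_xU}}$ is well defined via \cref{lem:decompose}. The computation is then immediate:
\begin{align*}
\braket{\phi_{B_{x'}U}}{\phi_{B_xU}}=\frac{1}{2^n}\Tr\!\left(U^\dagger B_{x'}^\dagger B_x U\right)=\frac{1}{2^n}\Tr\!\left(B_{x'}^\dagger B_x\right)=\delta_{x,x'},
\end{align*}
where unitarity $UU^\dagger=\identity$ together with cyclicity of the trace removes $U$, and Pauli orthonormality finishes it. As $x$ ranges over $\{0,1\}^{2n-r}$ this produces $2^{2n-r}$ mutually orthonormal vectors in a space of dimension $2^{2n-r}$, which is therefore an orthonormal basis.

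I expect the only genuine obstacle to be resisting the temptation to prove the statement by direct manipulation of $\tilde{B}_x:=V_{\textnormal{st}}B_xV_{\textnormal{st}}^\dagger=X_xZ_{g_x}$ acting on $\ket{\phi_U}$; that route forces one to track the $Z_{c_i}$ phases from \cref{def:standard} and the (anti)commutation signs $c(B_i,B_j)$, which is correct but cumbersome and tends to leave a stray $z$-dependent phase that must be reconciled against unitarity. The Hilbert--Schmidt reformulation sidesteps all of this and isolates unitarity of $U$ as the single property that makes the states orthonormal — without $UU^\dagger=\identity$ the $U$ factors would not cancel. The phase and non-Hermiticity ambiguities in the $B_z$ flagged after \cref{lem:decompose} are harmless here, since every such phase cancels inside $\Tr(B_{x'}^\dagger B_x)$.
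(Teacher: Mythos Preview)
Your proof is correct and takes a genuinely different, cleaner route than the paper's. The paper does precisely what you warn against: it first writes $\ket{\phi_{B_xU}}=X_xZ_{c_x}\ket{\phi_U}$ via $V_{\textnormal{st}}$, expands $|\braket{\phi_{B_xU}}{\phi_{B_yU}}|$ as a sum $\sum_b\gamma^*_{b\oplus z}\gamma_b(-1)^{b^Tc_z}$ with $z=x\oplus y$, and then separately expands $\delta_z=\Tr(B_zUU^\dagger)/2^n$ in the same way to recognise the match. Your Hilbert--Schmidt reformulation collapses both expansions into the single identity $\braket{\phi_{U'}}{\phi_U}=\frac{1}{2^n}\Tr(U'^{\dagger}U)$, after which unitarity and cyclicity of the trace finish immediately. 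This also gives the signed inner product rather than just its modulus. The only thing the paper's computation yields that yours does not is the explicit formula $\ket{\phi_{B_xU}}=X_xZ_{c_x}\ket{\phi_U}$ along the way; however, that formula is not invoked elsewhere in the paper, so nothing is lost. One minor point of presentation: in the line ``apply this identity with $U'=B_{x'}U$ and $U=B_xU$'' you overload the symbol $U$; renaming the generic inputs to the identity (say $W,W'$) would avoid the clash with the fixed unitary of the lemma.
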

		
		\begin{proof}
\begin{align*}
		\delta_z
		&= \Tr(B_zUU^\dagger)/2^n\\
		&= \Tr\left(\sum_{a,b} \gamma^*_a\gamma_b B_zB_bB_a^\dagger\right)/2^n\\
		&=\sum_{b}  \gamma^*_{b\oplus z} \gamma_b \Tr(B_zB_bB_{z\oplus b}^\dagger)/2^n\\
		\intertext{Since $B_{z\oplus b}=B_zB_b(-1)^{b^T c_z}$ where $c_z =\bigoplus_iz_ic_i$, we have}
		&= \sum_{b}  \gamma^*_{b\oplus z} \gamma_b \Tr(B_z B_b B_b^\dagger B_z^\dagger)(-1)^{b^T c_z}/2^n \\
		&= \sum_{b} \gamma^*_{b\oplus z} \gamma_b (-1)^{b^T c_z  }.
		\end{align*}
Now observe that
$$\ket{\phi_{B_xU}}=V_{\textnormal{st}} B_x V_{\textnormal{st}}^\dagger\ket{\phi_{U}}=X_xZ_{c_z}\ket{\phi_{U}}.$$
Hence
		\begin{align*}
		|\braket{\phi_{B_x U} }{ \phi_{B_y U} }|
		&= |\bra{ \phi_U } Z_{c_x}X_xX_{y} Z_{c_y}  \ket{\phi_U}|\\
		&=|\bra{ \phi_U } X_zZ_{c_z}  \ket{\phi_U}|,
		\intertext{where $z = x \oplus y$. Expanding this explicitly, we have}
		&= \sum_{a} \gamma_a^* \bra{a} X_z Z_{c_z} \sum_b \gamma_b \ket{b}\\
		&= \sum_{b} \gamma^*_{b\oplus z} \gamma_b (-1)^{b^T c_z  }\\
		&=\delta_{x,y}.
		\end{align*}
		\end{proof}
		
		An immediate corollary of \cref{lemma:basis}, is that $\left\{\ket{\phi^*_{B_xU}} = \sum_x \gamma^*_x \ket{x}\right\}$ also forms an orthonormal basis.
		
		This now gives us the tools to prove that \cref{general_RM_protocol} is a communication-optimal RM protocol for any gate set $\mathcal{F}$. A similar PS communication optimal protocol is given in \cref{app:PS}.

		\begin{theorem}\label{thm:opt_prot}
		\cref{general_RM_protocol} is an optimal RM protocol for $\mathcal{F}$, where Bob sends $2n-r$ qubits to Alice.
		\end{theorem}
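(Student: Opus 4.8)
The plan is to verify three things: (i) \emph{correctness}, that Alice's measurement leaves Bob holding $B_z U\ket{\psi}$ and she learns the padding $B_z$; (ii) \emph{hiding}, that Bob's state is independent of Alice's secret choice of $U\in\mathcal{F}$; and (iii) \emph{optimality}, that the $2n-r$ qubits Bob transmits match the lower bound of \cref{thm:resource}. The third is immediate once the first two hold, so all the work is in correctness and hiding.

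For correctness, I would start from the state produced by the circuit of \cref{general_measurement_circuit}, namely $\ket{\Psi}=\frac{1}{\sqrt{2^{2n-r}}}\sum_x V\ket{x}\otimes B_x\ket{\psi}$, and project the first register onto the outcome $V\ket{\phi^*_{B_zU}}$. The corollary to \cref{lemma:basis} guarantees that $\{\ket{\phi^*_{B_xU}}\}$ is orthonormal, so $\{V\ket{\phi^*_{B_xU}}\}$ is a legitimate measurement basis. Writing the decomposition $B_zU=\sum_x\mu_x B_x$ (which exists because $\mathcal{B}$ is closed under multiplication, so $B_zU$ lies in the span of $\mathcal{B}$), we have $\ket{\phi^*_{B_zU}}=\sum_x\mu_x^*\ket{x}$, and hence the key identity $\braket{\phi^*_{B_zU}}{x}=\mu_x$. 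Applying $(\bra{\phi^*_{B_zU}}V^\dagger\otimes\identity)$ to $\ket{\Psi}$ pulls the projector through the sum and collapses Bob's register to $\sum_x\mu_x B_x\ket{\psi}=B_zU\ket{\psi}$; the squared norm of this branch is $\tfrac{1}{2^{2n-r}}$ independent of $z$, so every outcome is equally likely and, on learning $z$, Alice correctly deduces the padding $B_z$.

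For hiding, the cleanest route is a no-signalling argument. Tracing Alice's register out of $\ket{\Psi}$ gives Bob's reduced state $\frac{1}{2^{2n-r}}\sum_x B_x\proj{\psi}B_x^\dagger$ \emph{before} any measurement, which is manifestly independent of $U$ (the circuit depends only on $\mathcal{B}$ and $V$, not on Alice's choice). Since Alice's measurement is a local operation whose basis and outcome Bob never receives, his reduced state is unchanged by it, so his post-protocol state is $U$-independent and the hiding property of \cref{lem:hiding} holds. Equivalently, the protocol realises precisely the uniform-over-$\mathcal{B}$ padding set of \cref{lem:blind_padding}, whose hiding was already established there; the two descriptions agree because $\sum_z B_z(U\proj{\psi}U^\dagger)B_z^\dagger=\sum_x B_x\proj{\psi}B_x^\dagger$ is exactly the hiding identity for that padding.

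Finally, the protocol transmits the top $\dim(\mathcal{B})=2n-r$ qubits and nothing else, so $E(N)=2n-r$, saturating \cref{thm:resource} and establishing optimality. I expect the main obstacle to be the correctness computation: keeping the phase and (anti)commutation bookkeeping of the non-Hermitian basis products $B_zU$ consistent, so that $\braket{\phi^*_{B_zU}}{x}=\mu_x$ and the branch collapses cleanly to $B_zU\ket{\psi}$. This is the same $(-1)^{b^T c_z}$ phase tracking that drives the proof of \cref{lemma:basis}, and reusing that lemma's structure is what makes the calculation go through without error.
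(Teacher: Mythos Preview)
Your proposal is correct and follows essentially the same route as the paper: project $\ket{\Psi'}$ onto $V\ket{\phi^*_{B_zU}}$, use $\braket{\phi^*_{B_zU}}{x}=\mu_x$ to collapse Bob's register to $\frac{1}{\sqrt{2^{2n-r}}}B_zU\ket{\psi}$, and argue hiding via $\rho_B=\Tr_A\proj{\Psi'}$ being $U$-independent. Your anticipated ``main obstacle'' is not actually one here---the identity $\braket{\phi^*_{B_zU}}{x}=\mu_x$ follows directly from the definition of $\ket{\phi^*}$ and requires none of the $(-1)^{b^Tc_z}$ phase tracking from \cref{lemma:basis}; that lemma is only needed to certify orthonormality of the measurement basis, which you already invoke.
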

		
		\begin{proof}
		Let $\ket{\phi_{B_zU}^*} = \sum_x \gamma_x^* \ket{x}$ where $\sum_x \gamma_x B_x = B_z U$. Bob's circuit in \cref{fig:protocol} produces an output state $\ket{\Psi'}=(V\otimes\identity)\ket{\Psi}=\frac{1}{\sqrt{2^{2n-r}}} \sum_x V \ket{x}_A B_x\ket{\psi}_B $.

		Alice is given the first $2n-r$ qubits of $\ket{\Psi'}$, and performs her measurement. Suppose the measurement result $z$ is obtained, and Alice's state is projected onto $V\ket{\phi^*_{B_zU}}$. The resulting state (up to normalisation) is:
		\begin{align*}
		\left(\bra{\phi^*_{B_zU}} V^\dagger\right) \otimes \identity_B \ket{\Psi'}&=
		\left(\bra{\phi^*_{B_zU}} V^\dagger V\right) \otimes \identity \ket{\Psi}\\
		&=\left(\sum_y \gamma_y \bra{y}\otimes \identity \right) \frac{\sum_x \ket{x} B_x \ket{\psi}}{\sqrt{2^{2n-r}}}\\
		&= \frac{1}{\sqrt{2^{2n-r}}}\sum_x \gamma_x B_x \ket{\psi}\\
		&=  \frac{1}{\sqrt{2^{2n-r}}} B_z U\ket{\psi}.
		\end{align*}
	
	Each measurement result $z$ is equally likely, resulting in the unitary $B_zU$. Bob does not know Alice's chosen measurement basis, nor her measurement result. From his perspective, the state he holds is $\rho_B = \Tr_A(\proj{\Psi'})$, which is independent of $U$ and leaks no information, meaning Bob can gain no advantage in guessing $U$.
		\end{proof}

		\begin{example}[ex3.1]
		Consider the target gate set $\mathcal{F}=\{R_z(\theta) | \theta \in [0, 2\pi )\}$, where
		$$
		R_z(\theta)=\begin{bmatrix} 1 & 0 \\ 0 & e^{i\theta} \end{bmatrix}=e^{i\theta/2}\left(\cos\frac{\theta}{2}\identity-i\sin\frac{\theta}{2}Z\right).
		$$
		With $\mathcal{P}_{\mathcal{F}}=\langle Z\rangle=\mathcal{B}$, \cref{thm:resource} conveys that we need at least 1 qubit of communication. We can choose $V=\identity$. Bob applies the circuit ($V=\identity$)
		$$
		\begin{quantikz}
		\lstick{\ket{+}} & \ctrl{1}&\\
		\lstick{\ket{\psi}} & \gate{Z} &
		\end{quantikz}
		$$
		sending the first qubit to Alice.
		The resulting state is
		$$
		\ket{\Psi } = \frac{1}{\sqrt{2}}\left( \ket{0}_A \otimes \identity \ket{\psi}_B +\ket{1}_A \otimes Z \ket{\psi}_B \right).
		$$
		Having chosen a target $\theta'$, Alice measures in the basis
		\begin{align*}
		\ket{ \phi_{0} } &= \cos \frac{\theta'}{2}\ket{0} + i\sin\frac{ \theta'}{2} \ket{1} \\
		\ket{\phi_1} &= X \ket{\phi_0}
		\end{align*}
		where $\braket{\phi_i}{\phi_j}=\delta_{ij}$. We see that
		\begin{align*}
		(\bra{\phi_i} \otimes \identity )\ket{\Psi} &=  e^{i \theta'/2} \frac{1}{\sqrt{2}} Z^iR_z(\theta') \ket{\psi}.
		\end{align*}
		Bob holds the state $Z^i R_z(\theta')\ket{\psi}$ and Alice knows the measurement outcome, bit $i$, with both values having been equally likely.

		Bob, who has no idea of bit $i$, sees the state as one of
		$$\frac12(\proj{\psi}+Z\proj{\psi}Z),
$$
		independent of $\theta$. There is nothing he can do to discover $\theta'$.
		\end{example}

		\section{Separable States for Clifford Gates}

		Our optimal protocol for $\mathcal{F}$ generically requires Alice to measure in an entangled basis. In the case where $\mathcal{F} = \{\identity, U_\textnormal{Cl}\}$ for any Clifford gate $U_\textnormal{Cl}$, we will now show how to choose $V$ to ensure that Alice only needs to measure separable stabilizer states.
		
		If $U$ is a Clifford gate, then the state $\ket{\phi_{B_zU}^*}$ is created by Clifford gates, and must be a stabilizer state. Moreover, the stabilizers are independent of $z$ (only the signs on the stabilizers depend on $z$). This observation already reduces Alice's role to measuring stabilizers rather than an arbitrary basis. Let $\langle g_i\rangle$ and $\langle h_i\rangle$ be bases for the stabilizers of $\ket{\phi_{\identity}}$ and $\ket{\phi_{U}}$. Alice must then measure the stabilizers $Vg_iV^\dagger$ or $V h_i V^\dagger$. We also choose to fix $V$ as Clifford. The goal of this section is to prove that we can choose a $V$ such that the generators are $\langle Z_i\rangle$ and $\langle X_i\rangle$ respectively for the two gates.

		\begin{lemma}\label{lem:stabsRM}
		Let $U_\textnormal{Cl}$ be a Clifford gate and $\mathcal{F}=\{\identity,U_\textnormal{Cl}\}$. For any optimal blind gate protocol $\mathcal{A}_{\mathcal{F}}$, described by \cref{general_RM_protocol}, the states $\ket{\phi_\identity}$ and $\ket{\phi_{U_\textnormal{Cl}}}$ are stabilizer states that do not share any stabilizers.
		\end{lemma}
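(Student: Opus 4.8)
The plan is to reduce the lemma to a statement about the Pauli support of $U_\textnormal{Cl}$ and then pin down that support using the symplectic structure of Clifford conjugation. First, both states are stabilizer states: $\ket{\phi_\identity}=V_\textnormal{st}\,\identity\,V_\textnormal{st}^\dagger\ket{0}^{\otimes(2n-r)}=\ket{0}^{\otimes(2n-r)}$ and $\ket{\phi_{U_\textnormal{Cl}}}=V_\textnormal{st}(U_\textnormal{Cl}\otimes\identity)V_\textnormal{st}^\dagger\ket{0}^{\otimes(2n-r)}$ are both obtained by applying a Clifford to $\ket{0}$, as already observed above the lemma. Since the stabilizer group of $\ket{\phi_\identity}=\ket{0}^{\otimes(2n-r)}$ is exactly $\langle Z_1,\dots,Z_{2n-r}\rangle$, any operator shared with the stabilizer group of $\ket{\phi_{U_\textnormal{Cl}}}$ must be a $+$-signed $Z$-type Pauli $Z_v$. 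Writing $\ket{\phi_{U_\textnormal{Cl}}}=\sum_z\gamma_z\ket{z}$, the equation $Z_v\ket{\phi_{U_\textnormal{Cl}}}=\ket{\phi_{U_\textnormal{Cl}}}$ holds iff $v^Tz=0\pmod 2$ for every $z$ with $\gamma_z\neq 0$. The entire lemma therefore reduces to showing that $\gamma_z\neq0$ for all $z\in\{0,1\}^{2n-r}$, since then no nonzero $v$ can be orthogonal to the whole support and the only shared stabilizer is $\identity$.

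The hard part is establishing this full support. By \cref{lem:decompose} we have $U_\textnormal{Cl}=\sum_z\gamma_z B_z$ with every $B_z\in\mathcal{B}$, so the set of Paulis $\mathcal{S}=\{P_a:\Tr(P_a^\dagger U_\textnormal{Cl})\neq 0\}$ is contained in $\mathcal{B}$, and $\gamma_z\neq 0$ for all $z$ is equivalent to $\mathcal{S}=\mathcal{B}$. Let $M$ be the symplectic matrix of $U_\textnormal{Cl}$, defined by $U_\textnormal{Cl}P_aU_\textnormal{Cl}^\dagger\propto P_{Ma}$. Expanding both sides of the identity $P_rU_\textnormal{Cl}=U_\textnormal{Cl}\,(U_\textnormal{Cl}^\dagger P_rU_\textnormal{Cl})$ in the Pauli basis and comparing the coefficient of each Pauli shows that the magnitude $|\Tr(P_a^\dagger U_\textnormal{Cl})|$ is invariant under $a\mapsto a+(\identity+M^{-1})r$ for every $r$; hence $\mathcal{S}$ is a union of cosets of $\mathrm{Im}(\identity+M^{-1})$. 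The crucial observation is that $\ker(\identity+M)$ consists precisely of the labels $a$ with $P_a\in\mathcal{P}_\mathcal{F}$, because $Ma=a$ is exactly the condition $U_\textnormal{Cl}P_aU_\textnormal{Cl}^\dagger=\pm P_a$; thus $\dim\ker(\identity+M)=r$ and $\dim\mathrm{Im}(\identity+M^{-1})=2n-r=\dim\mathcal{B}$. Because translating any element of $\mathcal{S}\subseteq\mathcal{B}$ by a vector of $\mathrm{Im}(\identity+M^{-1})$ keeps it in $\mathcal{B}$, this image lies inside $\mathcal{B}$, and having the same dimension it equals $\mathcal{B}$. A nonempty union of cosets of $\mathcal{B}$ that is contained in $\mathcal{B}$ can only be $\mathcal{B}$ itself, so $\mathcal{S}=\mathcal{B}$ and every $\gamma_z$ is nonzero.

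Putting the two parts together proves the lemma. I expect the dimension count $\dim\ker(\identity+M)=\dim\mathcal{P}_\mathcal{F}=r$ to be the decisive and most delicate step, as it is what forces $U_\textnormal{Cl}$ to spread over the whole basis $\mathcal{B}$ and thereby leaves no room for a common stabilizer. I would also be careful with the phases in the coefficient-matching argument, but since only the magnitudes $|\Tr(P_a^\dagger U_\textnormal{Cl})|$ determine the location of $\mathcal{S}$, these phase factors are irrelevant to the final conclusion.
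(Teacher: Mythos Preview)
Your proof is correct, but the route is genuinely different from the paper's. The paper argues by contradiction against the resource lower bound (\cref{thm:resource}): if $\ket{\phi_\identity}$ and $\ket{\phi_{U_\textnormal{Cl}}}$ shared a stabilizer $S$, one chooses $V$ with $VSV^\dagger=Z_1$; then Alice's first measurement is always $Z_1$ irrespective of her gate choice, so Bob can perform it himself and send only $2n-r-1$ qubits, contradicting $E(N)\geq 2n-r$. Your argument is instead a direct structural computation: using the symplectic action $a\mapsto Ma$ of $U_\textnormal{Cl}$ you show the Pauli support of $U_\textnormal{Cl}$ is a coset of $\mathrm{Im}(\identity+M^{-1})$, identify $\ker(\identity+M)$ with $\mathcal{P}_\mathcal{F}$, and conclude by a dimension count that the support is all of $\mathcal{B}$, so every $\gamma_z$ is nonzero and no $Z_v$ with $v\neq 0$ can stabilise $\ket{\phi_{U_\textnormal{Cl}}}$.

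What the two approaches buy: the paper's proof is short and leverages machinery already in place, but says nothing further about the $\gamma_z$. Your argument is self-contained (it does not invoke the entropy bound at all) and actually proves the stronger fact that every $|\gamma_z|$ equals $2^{-(2n-r)/2}$, which is a pleasant structural statement about Clifford gates in their own right. The coefficient-matching step and the identification $\dim\ker(\identity+M)=r$ are exactly the right ingredients; the phases you worry about are indeed immaterial because only $|\gamma_z|$ locates $\mathcal{S}$, and the containment $\mathrm{Im}(\identity+M^{-1})\subseteq\mathcal{B}$ can also be checked directly from $M^T\Omega M=\Omega$ if one prefers not to route it through $\mathcal{S}$.
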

		\begin{proof} Assume that the two states share a stabilizer $S \in \langle g_1, \ldots ,g_{2n-r} \rangle \cap \langle h_1 \ldots, h_{2n-r} \rangle\setminus\identity$. We adapt \cref{general_RM_protocol} by choosing $V$ such that $VSV^\dagger=Z_1$. We can always update the presentation of the stabilizers so that all other generators are $\identity$ on the first qubit.

		In \cref{general_RM_protocol}, Alice always measures the stabilizer $Z_1$ on the qubit she receives. We take advantage of this, with Bob instead measuring the first qubit in the $Z$ basis and sending Alice the remaining qubits along with the measurement result. Alice is then responsible for measuring the remaining stabilizers.

		Clearly, this still constitutes a blind protocol quantum gate protocol for $\mathcal{F}=\{\identity, U_{\text{Cl}}\}$ which only requires $2n-r-1$ qubits of communication, contradicting \cref{thm:resource}. The assumption must be false.
		\end{proof}
		
		\begin{lemma}\label{lem:sep_stab}
		For any Clifford gate $U_\textnormal{Cl}$, there exists an optimal blind gate protocol for $\mathcal{F} = \{\identity, U_\textnormal{Cl}\}$ such that Alice only needs to measure $X$ and $Z$ stabilizers. 
		\end{lemma}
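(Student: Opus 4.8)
The plan is to reduce the claim to a purely symplectic statement about stabilizer groups and then invoke the standard correspondence between the Clifford and symplectic groups. By \cref{lem:stabsRM}, $\ket{\phi_\identity}$ and $\ket{\phi_{U_\textnormal{Cl}}}$ are stabilizer states on $m:=2n-r$ qubits whose stabilizer groups $\langle g_1,\dots,g_m\rangle$ and $\langle h_1,\dots,h_m\rangle$ share no nontrivial element. As already observed before \cref{lem:stabsRM}, Alice's measurement in the basis $\{V\ket{\phi^*_{B_xU}}\}$ of \cref{general_RM_protocol} is equivalent to measuring the generators $Vg_iV^\dagger$ when $U=\identity$, and $Vh_iV^\dagger$ when $U=U_\textnormal{Cl}$ (the different $B_x$ only change the signs, i.e.\ the outcome). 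Hence it suffices to produce a Clifford $V$ with $Vg_iV^\dagger=Z_i$ and $Vh_iV^\dagger=X_i$: then the two measurement bases are the joint $Z$-eigenbasis and the joint $X$-eigenbasis, which are exactly single-qubit $Z$- and $X$-measurements.

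First I would pass to the symplectic representation, sending each Pauli on $m$ qubits to its $2m$-bit vector with commutation form $c(u,w)=u^T\Omega w$. The groups $\langle g_i\rangle$ and $\langle h_i\rangle$ become subspaces $L_g,L_h\subseteq\mathbb{F}_2^{2m}$ that are isotropic (their generators commute) and of dimension $m$, hence Lagrangian. The no-shared-stabilizer conclusion of \cref{lem:stabsRM} is exactly $L_g\cap L_h=\{0\}$, so with $\dim L_g+\dim L_h=2m$ the two Lagrangians are complementary: $\mathbb{F}_2^{2m}=L_g\oplus L_h$.

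The key step is to pick bases adapted to this splitting. Because $L_h$ is Lagrangian, $L_h^{\perp}=L_h$ (with respect to $c$), so the bilinear pairing $L_g\times L_h\to\mathbb{F}_2$, $(u,w)\mapsto u^T\Omega w$, has left kernel $L_g\cap L_h^{\perp}=L_g\cap L_h=\{0\}$; by the dimension count it is nondegenerate. I can therefore choose a basis $g_1,\dots,g_m$ of $L_g$ and the dual basis $h_1,\dots,h_m$ of $L_h$ with $c(g_i,h_j)=\delta_{ij}$. Together with $c(g_i,g_j)=c(h_i,h_j)=0$ (isotropy), the set $\{g_i,h_i\}$ is a symplectic basis whose commutation table matches that of $\{Z_i,X_i\}$ exactly. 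By the standard fact that every element of the symplectic group $\mathrm{Sp}(2m,\mathbb{F}_2)$ is realised, up to Paulis and a global phase, by a Clifford unitary, there is a Clifford $V$ with $Vg_iV^\dagger=\pm Z_i$ and $Vh_iV^\dagger=\pm X_i$. Replacing $V$ by $X_aZ_bV$ flips the $Z_i$ signs through the bits of $a$ and the $X_i$ signs through the bits of $b$, independently, so all $2m$ signs can be set to $+$ at once, giving $Vg_iV^\dagger=Z_i$ and $Vh_iV^\dagger=X_i$ as required.

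I expect the crux to be the symplectic linear-algebra step: establishing that complementary Lagrangians over $\mathbb{F}_2$ always carry mutually dual bases forming a single symplectic basis, and checking that one Clifford $V$, corrected by a single Pauli $X_aZ_b$, sends both generating sets simultaneously to the standard $Z$- and $X$-stabilizers (rather than requiring two incompatible sign corrections). The remaining ingredients---reading stabilizer groups as symplectic subspaces and lifting a symplectic map to a Clifford---are routine stabilizer formalism, and the reduction in the first paragraph is already supplied by the text preceding \cref{lem:stabsRM}.
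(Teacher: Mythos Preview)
Your proposal is correct and follows essentially the same argument as the paper: invoke \cref{lem:stabsRM} to get trivially intersecting stabilizer groups, pick dual generating sets with $c(g_i,h_j)=\delta_{ij}$, and lift the resulting symplectic basis to a Clifford $V$ sending the two sets to $\{Z_i\}$ and $\{X_i\}$. The paper phrases the nondegeneracy step as invertibility of the commutation matrix $M_{ij}=c(g_j,h_i)$ and omits your explicit Pauli sign correction (which is harmless anyway, since stabilizer signs only relabel measurement outcomes), but the content is identical.
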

		
		\begin{proof}

		Let $\{g_{i}\}$ and $\{h_i\}$ be the stabilizers Alice must measure for $U = \identity$ and $U=U_{\text{Cl}}$, respectively. By \cref{lem:stabsRM}, the matrix $M_{i,j} = c(g_j, h_i)$ must be invertible. Row operations can deliver a new basis $\{\hat{h}_i\}$ such that $c(g_i, \hat{h}_j) = \delta_{i,j}$. There must exist a Clifford $V$ such that $V^\dagger g_i V = Z_i$ and  $V^\dagger \hat{h}_i V = X_i$. For this $V$, Alice only needs to measure separable stabilizer states.
		\end{proof}
		
		An explicit formula for $V$ can be computed. We know $\ket{\phi_{\identity}} = V_\textnormal{st} \identity V_\textnormal{st}^\dagger \ket{0\ldots 0}$, so has stabilizers $Z_1, Z_2, \ldots Z_{2n-r}$. Let $\hat{h}_i$ be a basis for the stabilizers for $\ket{\phi_{U}}$, satisfying $c(Z_i, \hat{h}_j) = \delta_{i,j}$. This can only be achieved if $\hat{h}_i = X_i Z_{f_i}$, for some binary vector $f_i$. We want our $V$ to satisfy $VZ_iV^\dagger = Z_i$ and $VX_i Z_{f_i} V^\dagger = X_i$. Combining these together yields $VX_iV^\dagger  = X_i Z_{f_i}$ which is achieved using $S_i$ if $f_{i,i}=1$ and $\text{c}Z_{i,j}$ if $f_{i,j}=1$.

		\begin{example}[ex:CP]
		The Pauli decomposition of controlled-phase, c$Z$, is
		$$\text{c}Z = \frac{1}{2}\left( \identity + Z_1  + Z_2 - Z_1Z_2 \right).$$
		To achieve a blind implementation of the family $\mathcal{F} = \{\identity$, c$Z\}$, we take $\mathcal{P}_{\mathcal{F}} = \langle Z_1, Z_2 \rangle=\mathcal{B}$.		

The states (corresponding to measurement bases) for the two different gates are
\begin{align*}
\ket{\phi_\identity} &= \ket{00} \\
\ket{\phi_{\text{c}Z}}&= \frac{1}{2}(\ket{00}+\ket{01}+\ket{10}-\ket{11})
\end{align*}
with stabilizers $\langle Z_1,Z_2\rangle$ and $\langle X_1Z_2,Z_1X_2\rangle$ respectively. We can simultaneously transform both bases into separable bases simply by applying a $V$ that is controlled-phase.
		Hence the circuit for a scheme with separable qubits is:
		\[
		\begin{quantikz}[wire types={q,q,c}, classical gap = 0.05cm]
		\lstick{$\ket{+}$}&& \ctrl{2} && \ctrl{1}&\\
		\lstick{$\ket{+}$}&&& \ctrl{1}&\ctrl{0} &\\
		\lstick{$\ket{\psi}$}&& \gate{Z_1}&\gate{Z_2}&&
		\end{quantikz}
		\]

		If Alice measures both qubits in the $Z$ basis, and obtains the bit-string $x$, the resulting state is $Z_x \ket{\psi}$. If instead Alice measures in the $X$ basis, the resulting state would be $Z_x \text{c}Z\ket{\psi}$.

		The PS protocol is almost identical,
		\[
		\begin{quantikz}[wire types={q,q,c}, classical gap = 0.05cm]
		\lstick[2]{\ket{\phi_U}}&& \ctrl{2} && \ctrl{1}& \gate{H} & \meter{}\\
		&&& \ctrl{1}&\ctrl{0} &\gate{H} & \meter{}\\
		\lstick{$\ket{\psi}$}&& \gate{Z_1}&\gate{Z_2}&&&
		\end{quantikz}
		\]
		with the final padding being a function of the padding on the separable input states $\ket{\phi_U}$ ($Y_x\ket{00}$ or $Y_x\ket{++}$) that Alice sends to Bob, and Bob's measurement result.
		\end{example}

		An alternative procedure that achieves a similar results is to view the Clifford gates as a product of transvections, $U_P= \frac{1}{\sqrt{2}}\left(\identity + i P\right)$ \cite{pllaha2021,o1974lectures}. Each transevection can be implemented with RM (PS) using a single qubit of communication. This gives a straightforward separable protocol in terms of either $2n-r$ or $2n-r+1$ qubits. However, with some effort, one can transform the one extra qubit (if present) into an equivalent of $V$.

		\section{Conclusions \& Open Problems}
		
		In this paper, we have used entropy techniques to lower bound the number of qubits of communication required between client and server in any information-theoretically secure blind quantum computation scheme (that encrypts states under Pauli padding), and have given corresponding protocols that saturate these bounds. Any set of gates, $\mathcal{F}$, on $n$ qubits can be realised with exactly $2n-r$ qubits of communication. In this optimal protocol, Alice does most of the work. Bob essentially provides a quantum memory with Clifford operations. If we consider Alice's measurement as a unitary followed by computational basis measurements, then not only can that unitary be entangling, if the target unitary is in the $k\textsuperscript{th}$ level of the Clifford hierarchy, then so is the unitary that Alice implements. Alice has most of the burden of the complexity, despite being the semi-classical participant!

		We have described two different protocols -- prepare and send, and receive and measure. PS is the more commonly explored protocol. However, our RM variant is more natural in many ways. The security proof is much clearer as Bob never receives any communication from Alice. She does the measurements and keeps both the basis and the results secret. There can be no leakage. The method may also have better noise tolerance properties. We imagine that Bob is presenting a perfect computational device to Alice by operating an error correcting code on top of a much larger and noisier physical computer. With PS, Alice has all the burden of preparing her states in a large, complex error correcting code in order to survive the journey to Bob. For RM, the reverse direction is much simpler as the error correction is already built in. Alice just has to process that by incorporating the decoding process into her measurements.
		
		We have concentrated specifically on the question ``given a fixed set of target gates, what's the smallest amount of communication possible?''. This naturally leads to demanding how many other gates can also be implemented, having found the minimal communication. For Clifford gates, using separable stabilizer measurements, we will show in a future work that this number is bounded between $2^{2n-r}$ and $3^{2n-r}$.
		
		\begin{example}[additonal_gates]
		Following \cref{ex:CP} for $\mathcal{F}=\{\identity,cZ\}$, what other gates can we add to the set for no additional cost?
		Each of the 6 measurement bases below realises a secure implementation of a distinct Clifford gate.

		\begin{center}
		\begin{tabular}{ccc}
		\toprule
		Stabilizers & $U$ \\
		\midrule
		$\langle Z_1,Z_2\rangle $ & $\identity$ \\
		$\langle Z_1,Y_2\rangle $ & $S_2$ \\
		$\langle Y_1,Z_2\rangle $& $S_1$ \\
		$\langle X_1,Y_2\rangle $ & $\text{c}Z S_2$ \\
		$\langle Y_1,X_2\rangle $ & $\text{c}Z S_1$ \\
		$\langle X_1,X_2\rangle $ & $\text{c}Z$ \\
		\bottomrule

		\end{tabular}
		\end{center}

\end{example}

The contrasting extreme of optimality is to fix the resources, say $n$ qubits of communication, and demand how many different unitaries can be implemented (with no regard for what those unitaries are). This was the focus of \cite{mantri2013}, where they focused on PS protocols and other variants. Let $\mathcal{A}_{\mathcal{M}}$ be any protocol in which Alice sends $n$ qubits of communication and can realise all quantum gates that lie on a manifold $\mathcal{M}$. These act on at most $L$ qubits. They were interested in the quantity $\Gamma(n) = \max_{\mathcal{A}_\mathcal{M}} \dim(\mathcal{M})$, ultimately showing that
$$
2^{n}-1\leq\Gamma(n)\leq 2(2^n-1).
$$
Here, $\mathcal{F} = \{ U= U(\theta) \in \mathcal{M} \}$. It must then be the case that $n \geq  2L - \dim(\mathcal{P}_\mathcal{F})$. All gates in $\mathcal{M}$ must have support on $\mathcal{B}$ where $\dim(\mathcal{B}) = 2L - \dim(\mathcal{P}_\mathcal{F}) \leq n$. The manifold of all unitaries (up to a phase) with support on $\mathcal{B}$ must be a submanifold of $\mathcal{M}' = \{U(\theta) = e^{i \sum_{x\neq \vec{0}} \theta_x B_x }, \theta \in \mathbb{R} ^{2^n-1}\}$. We must have $\Gamma(n) \leq 2^{n}-1$. This paper and \cite{mantri2013} demonstrated different protocols that saturate this bound; we in fact have $\Gamma(n) = 2^{n}-1$.

In order to achieve a blind implementation of an $N$ dimensional manifold using separable states/measurements, we can decompose the target gates into a sequence of 1-dimensional manifolds. We can use this notion to improve on the efficiency rating of the original brick work state. In the standard brick work state, every $4$ qubits Alice sends corresponds to a quantum gate specified by $3$ real parameters. This tells us that the standard MBQC protocol is within a factor of $\frac{4}{3}$ of optimality. More recently, \cite{ma2024} suggested an improved version of the brick work state gets within a factor of $\frac{5}{4}$ of optimality.

Our proofs have concentrated solely on encryption provided by Pauli padding. To our knowledge, all blind quantum computing schemes use some flavour of Pauli padding sets. There is no \emph{a priori} reason that blindness necessitates Pauli padding. Adjusting the definition of what gates are included in the padding set could be an option to circumvent our lower bounds, and will be interesting to investigate in the future.

It is also of interest how these results translate into the setting where a classical Alice uses post-quantum cryptography to generate blind gates. Mahadev \cite{mahadev2020} showed that an encrypted controlled-\textsc{not} can be achieved with a single query to a Learning-With-Errors (LWE) circuit, suggesting there are at least two bits of useful computational entropy within a single LWE sample. Could there be more computational entropy that can be leveraged to build more complex families of blind gates for the same number of queries?

		\bibliography{main_PS_RM}
		
 		\appendix\onecolumngrid
 		\crefalias{section}{appendix}
\section{Prepare and Send approach}\label{app:PS}

Here we discuss how our results for RM translate into the PS setting.
The previous entropy bound \cref{lem:entropy_RM} also holds when Alice is the party sending quantum states. The proof is almost identical, except that we consider $n_m$ to be an upper bound for the entropy of the state provided by Alice. All other actions in the protocol can only decrease entropy from Bob's perspective.

The qubit communication bound can also be saturated. For a general family $\mathcal{F}$, corresponding to a basis $\mathcal{B} = \langle B_1 \ldots B_{2n-r} \rangle$, we further need to define an additional Pauli space $\mathcal{Q}$. This has a basis $ \{ Q _1 \ldots Q_{2n-r}\}$ where $c(Q_i,B_j)=\delta_{ij}$, i.e.\ $Q_i$ anti-commutes with $B_i$ but commutes with all other $B_j$, such that $c(Q_y,B_x)=x^Ty$.	

\begin{figure}
	\centering
	\begin{quantikz}[wire types = {q,q,n,q,b}, classical gap = 0.07cm]
		\lstick[4]{$\ket{\phi_{U}}$} & \ctrl{4} &&\gate[style={draw=none}]{\ldots} && \gate{H}& \meter{}\rstick[4]{\hspace{0.7cm} $y$}\\
		&& \ctrl{3} & \gate[style={draw=none}]{\ldots}&& \gate{H}& \meter{}\\[-0.2cm]
		& &  & \gate[style={draw=none}]{\ddots}  & &&\vdots    \\[-0.2cm]
		&& & \gate[style={draw=none}]{\dots}  &\ctrl{1}& \gate{H}& \meter{} \\
		\lstick{$\ket{\psi}$} & \gate{B_1} & \gate{B_2} & \gate[style={draw=none}]{\ldots} & \gate{B_{2n-r}}\slice{1} &\slice{2} &&
	\end{quantikz}
	\caption{General circuit used for optimal blind quantum computing where Alice is capable of PS. Alice prepares the state $\ket{\phi_U}$, which is sent to Bob. The state is input into the circuit with Bob's state $\ket{\psi}$. }
	\label{general_circuit_PS}
\end{figure}
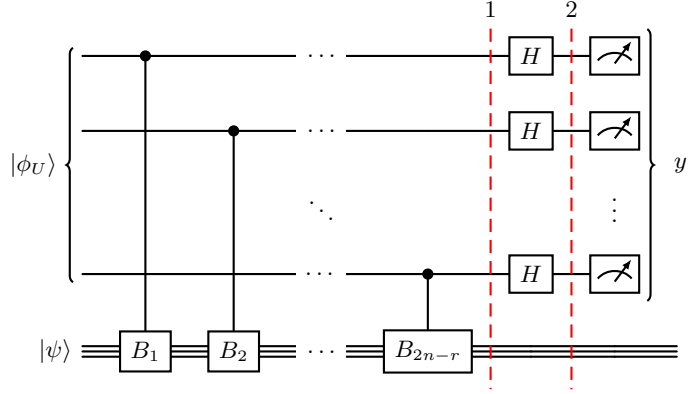

We can now offer a general two round protocol for Alice with PS.

	 	\begin{protocol}\label{prot:main}
	 		Alice plans to blindly implement the gate $\hat{U}\in\mathcal{F}$. She extends $\mathcal{F}$ to $\mathcal{F}'=U(\mathcal{B})$, the group of all unitaries which has support on $\mathcal{B}$ \footnote{It is not necessary to extend $\mathcal{F}$ to $U(\mathcal{B})$; it is sufficient to extend to it a large enough set such that $\mathcal{F}$ is a group that is closed under $U \rightarrow B_x  Q_y U Q_{y}^\dagger , \forall x, y$.}.
	 		\begin{enumerate}
		 			\item Alice selects a $U\in\mathcal{F}'$ uniformly at random. This has a decomposition of $U = \sum_{z} \gamma_z B_z$ using \cref{lem:decompose}.
		 			\item Alice creates the $2n-r$ qubit state$$\ket{\phi_U} =\sum_z \gamma_z \ket{x}=V_\textnormal{st}UV_\textnormal{st}^\dagger\ket{0}^{\otimes(2n-r)}$$ which she sends to Bob.
		 			\item Bob runs the circuit shown in \cref{general_circuit_PS}, getting measurement result $y$, which he sends to Alice.
		 			\item Alice uniformly samples $x \in \{0,1\}^{2n-r}$ and (classically) computes $\Lambda =  B_x \hat{U}Q_y^\dagger  U^\dagger Q_y$, which she sends to Bob.
		 			\item Bob applies $\Lambda$.
		 		\end{enumerate}
	 	\end{protocol}
	 	
	 		 	\begin{theorem}\label{thm:protocol}
	 		 		The 2-round protocol specified in \cref{prot:main} causes Bob to blindly implement a gate $\hat U\in\mathcal{F}$, with Alice sending $2n-r$ qubits to Bob.
	 		 	\end{theorem}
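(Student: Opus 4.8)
The plan is to establish three things in turn: \emph{correctness} (Bob's final register holds $B_x\hat U\ket{\psi}$ with the padding $B_x$ computable by Alice), the \emph{resource count} (the only quantum communication is the $(2n-r)$-qubit state $\ket{\phi_U}$), and \emph{blindness} (Bob's combined view is independent of $\hat U$). The single algebraic fact I would isolate first is the defining relation of $\mathcal{Q}$, namely $Q_yB_zQ_y^\dagger=(-1)^{z^Ty}B_z$, which follows from $c(Q_y,B_z)=z^Ty$; fixing the phase of each $Q_y$ so that it is a Hermitian involution ($Q_y^\dagger=Q_y$, $Q_y^2=\identity$) keeps the bookkeeping clean.

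For correctness I would propagate the state through \cref{general_circuit_PS}. Starting from $\ket{\phi_U}\otimes\ket{\psi}=\bigl(\sum_z\gamma_z\ket{z}\bigr)\otimes\ket{\psi}$, the controlled-$B_i$ gates produce $\sum_z\gamma_z\ket{z}\otimes B_z\ket{\psi}$; the layer of Hadamards followed by a computational-basis measurement with outcome $y$ collapses Bob's register, up to normalisation, to $\sum_z(-1)^{z\cdot y}\gamma_zB_z\ket{\psi}=Q_yUQ_y\ket{\psi}$, using the relation above. Applying $\Lambda=B_x\hat UQ_yU^\dagger Q_y$ then gives $\Lambda\,Q_yUQ_y\ket{\psi}=B_x\hat U\ket{\psi}$, since $Q_y^2=\identity$ and $U^\dagger U=\identity$. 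Every outcome $y$ occurs with probability $2^{-(2n-r)}$ because $Q_yUQ_y$ is unitary, so the intermediate measurement leaks nothing, and Alice knows $B_x$ because she chose $x$. The only quantum message is $\ket{\phi_U}$, giving exactly $2n-r$ qubits; the messages $y$ and $\Lambda$ are classical.

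For blindness I would show that Bob's whole view---the received state $\ket{\phi_U}$ together with the classical messages $y$ and $\Lambda$---is independent of $\hat U$. The state $\ket{\phi_U}$ on its own is harmless because $U$ is drawn uniformly from the group $\mathcal{F}'=U(\mathcal{B})$ independently of $\hat U$; the only avenue to learn $\hat U$ is the correlation between $\ket{\phi_U}$ and the later message $\Lambda$. Conditioning on $(y,\Lambda)$ and inverting the definition of $\Lambda$ gives $U=Q_y\Lambda^\dagger B_x\hat UQ_y$ as a function of the (to Bob unknown) values of $x$ and $\hat U$. Averaging the resulting $\ket{\phi_U}=V_{\textnormal{st}}UV_{\textnormal{st}}^\dagger\ket{0}$ over the uniform $x$ reduces Bob's conditional state to $M\bigl(\tfrac{1}{2^{2n-r}}\sum_xB_x\hat U\rho\hat U^\dagger B_x^\dagger\bigr)M^\dagger$ for a fixed, $\hat U$-independent unitary $M=V_{\textnormal{st}}Q_y\Lambda^\dagger$ and fixed $\rho$; by \cref{lem:blind_padding}, whose proof depends only on $\mathcal{B}$ and therefore hides every member of $U(\mathcal{B})$, this equals $M\bigl(\tfrac{1}{2^{2n-r}}\sum_xB_x\rho B_x^\dagger\bigr)M^\dagger$, independent of $\hat U$. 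A complementary sanity check is that $\Lambda$ is marginally uniform over $U(\mathcal{B})$: writing $W=Q_yU^\dagger Q_y$ (uniform as $U$ is uniform), left-invariance of the Haar measure makes $\hat UW$, hence $B_x\hat UW=\Lambda$, uniform regardless of $\hat U$.

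I expect blindness to be the main obstacle. Unlike the RM protocol, here Alice transmits the classical correction $\Lambda$ \emph{after} Bob's report, and $\Lambda$ genuinely depends on $\hat U$, so independence is not automatic. The crux is recognising that the two sources of randomness are distinct and both indispensable: the uniform choice of $U$ over the whole group $\mathcal{F}'$ supplies the group-invariance making $\Lambda$ marginally uniform, while the independent uniform Pauli $B_x$ supplies precisely the uniform-$\mathcal{B}$ padding that lets the $\ket{\phi_U}$--$\Lambda$ correlation be absorbed by the hiding property of \cref{lem:blind_padding}. One should also confirm that the extension $\mathcal{F}\to\mathcal{F}'=U(\mathcal{B})$ leaves $\mathcal{P}_\mathcal{F}$ and $\mathcal{B}$ unchanged---which holds because $\mathcal{P}_\mathcal{F}$, being a symplectic subspace, equals its own double commutant---so that \cref{lem:blind_padding} indeed applies to all of $U(\mathcal{B})$.
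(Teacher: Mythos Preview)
Your correctness and resource-count arguments match the paper's essentially line for line, including the identification $\sum_z(-1)^{z\cdot y}\gamma_zB_z=Q_yUQ_y^\dagger$ and the cancellation $\Lambda\,Q_yUQ_y^\dagger=B_x\hat U$.

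For blindness you take a genuinely different route. After the same conditioning on $(y,\Lambda)$ and the same inversion $U(x)=Q_y^\dagger\Lambda^\dagger B_x\hat UQ_y$, the paper proves directly that the family $\{\ket{\phi_{U(x)}}\}_x$ is an \emph{orthonormal basis}: it observes that the circuit of \cref{general_circuit_PS} up to the final slice is unitary, so $\braket{\phi_{U(z)}}{\phi_{U(x)}}$ equals an average over $t$ of $\bra{\psi}Q_tQ_y^\dagger\hat U^\dagger B_z^\dagger B_x\hat UQ_yQ_t^\dagger\ket{\psi}$, which collapses to $\eta_0=\tfrac{1}{2^n}\Tr(B_z^\dagger B_x)=\delta_{x,z}$. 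Hence Bob's conditional state is exactly $\identity/2^{2n-r}$. You instead factor $\proj{\phi_{U(x)}}=M\,(B_x\hat U)\rho_0(\hat U^\dagger B_x^\dagger)\,M^\dagger$ with $M,\rho_0$ independent of $(x,\hat U)$ and invoke \cref{lem:blind_padding} on the $\mathcal{B}$-twirl to strip out $\hat U$. Both arguments are valid. The paper's buys the stronger conclusion ``maximally mixed'', which makes robustness against Bob's adaptive choice of $y$ entirely transparent (any instrument applied to $\identity/2^{2n-r}$ is decoupled from everything). Your approach is more modular --- it reuses the padding machinery already developed for the lower bound and makes explicit why the extra randomness $B_x$, not just the Haar-random $U$, is indispensable --- and your Haar-invariance ``sanity check'' that $\Lambda$ is marginally uniform is exactly the change of variables the paper leaves implicit.
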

	 	 	\begin{proof}
	 		 		We first prove correctness of the protocol and then show the protocol gives away no advantage to Bob in guessing $\hat{U} \in \mathcal{F}$.	 		
	 		 		At step 3 of \cref{prot:main}, the first slice of \cref{general_circuit_PS}, the state would be $\sum_z \ket{z} \gamma_z B_z\ket{\psi}$. At the second slice, we have:
	 		 		\begin{align*}
	 			 			\ket{\Psi}&= \sum_y \ket{y} \left(\sum_z (-1)^{y^T z} \gamma_z B_z \right)\ket{\psi}
	 			 		\end{align*}
	 		 		Using the key properties of the $\mathcal{Q}$ basis, this can be expressed as
	 		 		\begin{align}
	 			 			\ket{\Psi}&= \sum_y \ket{y} \left(\sum_z  \gamma_z  Q_y B_z Q^{\dagger}_y \right)\ket{\psi}\nonumber\\
	 			 			&= \sum_y \ket{y} Q_y U Q_{y}^\dagger \ket{\psi}.\label{eq:Qpadding}
	 			 		\end{align}	
	 		 		Upon receiving measurement result $y$ in step 4, Alice knows that Bob has state $Q_{y}UQ_{y}^\dagger \ket{\psi}$. This random unitary $Q_{y}UQ_{y}^\dagger\in\mathcal{F}'$ effectively serves as a one-time pad on unitaries. When Bob applies $\Lambda$, he holds the state $\Lambda \ket{\psi} = B_x \hat{U} \ket{\psi}$.

	 	 	Proving security for this protocol is more involved: Bob can now behave adaptively and can effectively query Alice about the state he has received. We consider a general Bob who can deviate in the protocol in any way. He receives two pieces of information: the state $\ket{\phi_{U}}$ and the classical description of the gate $\Lambda$. We will show that the information in $\Lambda$ is entirely independent of the information in $U$. To that end, assume that Bob knows $\hat U$. He still doesn't entirely know $U$ because of the unknown $x$ introduced by Alice into $\Lambda$. He can thus describe $U$ as
	 	 	\begin{equation}\label{eq:U_of_x}
	 		 		U(x)=Q_y^\dagger \Lambda^\dagger B_x\hat UQ_y.
	 		 	\end{equation}
	 	 	Here, $y$ is the data that Bob has returned to Alice. Since Bob is not necessarily following the protocol, he can return anything he wants, which need not necessarily be related to any measurement on the state $\ket{\phi_U}$.
	 	
	 	 	The state that Bob receives is
	 	 	$$
	 	 	\rho_{\hat{U}}=\frac{1}{2^{2n-r}}\sum_x\proj{\phi_{U(x)}}.
	 	 	$$
	 	 	We claim that the $\ket{\phi_{U(x)}}$ defines an orthonormal basis, so $\rho_{\hat{U}}$ is maximally mixed state. Any experiment where Bob uses as input $\rho_{\hat{U}}$ and guesses $\hat{U}$ must produce results that are independent of $\hat{U}$ and yield no advantage. To prove this, we must evaluate $\braket{\phi_{U(z)}}{\phi_{U(x)}}$. It must be true that for any state $\ket{\psi}$,
	 	 	\begin{align*}
	 		 		\braket{\phi_{U(z)}}{\phi_{U(x)}}&=\braket{\phi_{U(z)}}{\phi_{U(x)}}\braket{\psi}{\psi} \\
	 		 		&=\frac{1}{2^{2n-r}}\sum_t\bra{\psi}Q_t^\dagger U^\dagger(z)U(x)Q_t\ket{\psi}
	 		 	\end{align*}
	 	 	where we have applied the unitary circuit of \cref{general_circuit_PS} up to the final slice (unitaries preserve inner products). If we substitute \cref{eq:U_of_x}, this reduces to
	 	 	$$
	 	 	\braket{\phi_{U(z)}}{\phi_{U(x)}}=\frac{1}{2^{2n-r}}\sum_t\bra{\psi}Q_tQ_y^\dagger\hat U^\dagger B_z^\dagger B_x \hat UQ_yQ_t^\dagger\ket{\psi}.
	 	 	$$
	 	 	In the case $x=z$, this is clearly 1; the state is normalised. In the case where $x\neq z$, we write $\hat U^\dagger B_z^\dagger B_x \hat U=\sum_s\eta_sB_s$, which means
	 	 	\begin{align*}
	 		 		\braket{\phi_{U(z)}}{\phi_{U(x)}}&=\frac{1}{2^{2n-r}}\sum_{s,t}(-1)^{(y\oplus t)\cdot s}\eta_s\bra{\psi}B_s\ket{\psi} \\
	 		 		&=\eta_0\bra{\psi}B_0\ket{\psi}.
	 		 	\end{align*}
	 	 	However,
	 	 	\begin{align*}
	 		 		2^{2n-r}\eta_0&=\Tr(\hat U^\dagger B_z^\dagger B_x \hat U) \\
	 		 		&=\Tr(B_z^\dagger B_x) \\
	 		 		&=0.
	 		 	\end{align*}
	 	 	We conclude that
	 	 	$\braket{\phi_{U(z)}}{\phi_{U(x)}}=\delta_{x,z}$ and therefore have $\rho_{\hat{U}} = \frac{1}{2^{2n-r}}\identity_{2n-r}$.
	 	 		 		 	\end{proof}

	 	The need for the second round of communication could be removed if $\forall y, \exists z \, s.t.\, Q_y U Q_y^\dagger U^\dagger\in\mathcal{P}_n$, which is immediately satisfied if $U$ is Clifford.
	 	To implement a gate $\hat{U}$ in this setting, Alice chooses a random $x$ and creates $\ket{\phi_{B_x\hat{U}}}$ which she sends to Bob. We know (\cref{lemma:basis}) these form an orthonormal basis, and leak no information to Bob. Upon receiving measurement result $y$, the resulting state is $Q_y B_x \hat U  Q_y^\dagger \ket{\psi} = P_z \hat U\ket{\psi}$, for some $z$ that can be computed by Alice. Similarly to the the RM setting, we can reduce Alice's need for entanglement generation. Alice could instead send the state $V \ket{\phi_{B_x \hat U}}$ and have Bob immediately apply $V^\dagger$ before continuing as before. We have seen in \cref{lem:sep_stab} that $V$ can be chosen such that $V\ket{\phi_{B_x \identity}}$ is stabilized by $\langle \pm Z_i \rangle $. If $\hat U$ is Clifford, $V\ket{\phi_{B_x \hat U}}$ is stabilized by $\langle  \pm X_{i} \rangle$. To implement the gate $\hat U^d$, Alice would send need to send Bob $Y_x {H^{\otimes (2n-r)} }^d \ket{0 \ldots 0}$, and only needs to send single qubit stabilizer states. See \cref{ex:CP} for a comparison between the single-round PS and the RM methods.

\end{document}